\DeclareMathOperator*{\essinf}{ess inf}
\DeclareMathOperator*{\esssup}{ess sup}
\DeclareMathOperator*{\argmin}{arg min}
\DeclareMathOperator*{\argmax}{arg max}
\def\l{\left}
\def\r{\right}
\def\p{\partial}
\newcommand{\R}{\mathbb{R}}
\renewcommand{\P}{\mathbb{P}}
\newcommand{\Q}{\mathbb{Q}}
\newcommand{\E}{\mathbb{E}}
\newcommand{\var}{\mathbb{V}ar}
\newcommand{\F}[1]{\mathcal{F}_{#1}}
\newcommand{\C}[2]{\mathcal{C}_{#1,#2}}
\newcommand{\ofp}{(\Omega,\mathcal{F},  \mathbb{P})}
\newcommand{\offp}{(\Omega,\mathcal{F}, \mathbb{F}, \mathbb{P})}
\newcommand{\ofgp}{(\Omega,\mathcal{F}, \mathbb{G}, \mathbb{P})}
\newcommand{\lofsp}{L^\infty(\Omega,\mathcal{F}_s,  \mathbb{P})}
\newcommand{\lofssp}{L^\infty(\Omega,\mathcal{F}_\sigma,  \mathbb{P})}
\newcommand{\loftp}{L^\infty(\Omega,\mathcal{F}_t,  \mathbb{P})}
\newcommand{\lofttp}{L^\infty(\Omega,\mathcal{F}_\tau,  \mathbb{P})}
\newcommand{\lofp}{L^\infty(\Omega,\mathcal{F},  \mathbb{P})}
\def\hatrho{\hat{\rho}}
\def\taustar{\tau^{*}}
\def\ind{{\mathchoice{1\mskip-4mu\mathrm l}{1\mskip-4mu\mathrm l}
		{1\mskip-4.5mu\mathrm l}{1\mskip-5mu\mathrm l}}}
\newcommand{\T}[2]{\mathcal{T}_{#1,#2}}
\newcommand{\D}[1]{\mathcal{D}_{#1}}
\newcommand{\blue}[1]{{ \color{blue}{{#1}}}}
\newcommand{\xred}[1]{{\color{red}{{\sout{{#1}}}}}}
\newcommand{\xxred}[1]{{\color{red}{{\xout{{#1}}}}}}
\newtheorem{definition}{Definition}[section]
\newtheorem{proposition}[definition]{Proposition}
\newtheorem{remark}[definition]{Remark}
\newtheorem{lemma}[definition]{Lemma}
\newtheorem{theorem}[definition]{Theorem}
\newtheorem{assumption}[definition]{Assumption}
\def\bRm{\bar{R}^m}
\def\Zm{\bar{\bf Z}^m}
\def\W{{\bf W}}
\def\hm{\bar{h}^m}
\def\bZ{{\bf Z}}
\def\HR{\hat{R}}
\def\HZ{\hat{Z}}
\def\HK{\hat{K}}
\def\hm{\bar{h}^m}
\title{Risk-indifference Pricing of American-style Contingent Claims}
\author{
Rohini Kumar\thanks{%
R. Kumar, Department of Mathematics, Wayne State University, 656 W. Kirby Street, Detroit, MI 48202, USA (e-mail: 
\texttt{rohini.kumar@wayne.edu}); Corresponding Author.}
\and
Frederick ``Forrest" Miller\thanks{%
F. Miller, Department of Mathematics, Northeastern University, Lake Hall, 360 Huntington Avenue, Boston MA 02115, USA (e-mail: 
\texttt{miller.f@northeastern.edu}).}
\and
Hussein Nasralah\thanks{%
H. Nasralah, Department of Mathematics and Statistics, University of Michigan--Dearborn, 4901 Evergreen Road, Dearborn, MI 48128, USA (e-mail: 
\texttt{hnasrala@umich.edu}).}
\and
Stephan Sturm\thanks{ %
S. Sturm, Department of Mathematical Sciences, Worcester Polytechnic Institute, 100 Institute Road, Worcester MA 06109, USA
	(e-mail: \texttt{ssturm@wpi.edu}).}
}
\begin{document}

\maketitle

\begin{abstract}
This paper studies the pricing of contingent claims of American style, using indifference pricing by fully dynamic convex risk measures. We provide a general definition of risk-indifference prices for buyers and sellers in continuous time, in a setting where buyer and seller have potentially different information, and show that these definitions are consistent with no-arbitrage principles. Specifying to stochastic volatility models, we characterize indifference prices via solutions of Backward Stochastic Differential Equations reflected at Backward Stochastic Differential Equations and show that this characterization provides a basis for the implementation of numerical methods using deep learning.
\end{abstract}

\vspace{5mm}
 
\begin{flushleft}
	 \textbf{Keywords:} American Options, Fully Dynamic Convex Risk Measures, Indifference Pricing, (Reflected) Backward Stochastic Differential Equations.\\
	 \textbf{Mathematics Subject Classification (2020):} 91G20, 91G70, 60H10.\\
	 \textbf{JEL classification:}  D81, G13,  C61.
\end{flushleft}

\section{Introduction}

The pricing of American style derivatives remains an active area of research. Most single stock options are of American style, thus driving the need to understand American claims. American options also appear implicitly in other domains of financial research, e.g., as real options, i.e., for the valuation of capital investments using option-pricing methods. The absence of a closed-form benchmark (as the Black--Scholes formula in the European case) channels their research into numerical methods. Additionally,  theoretical questions on optimal stopping in nonlinear market models saw recent progress, in particular in incomplete markets.

While the no-arbitrage principle guarantees a unique price in complete markets, in incomplete markets it provides only price bounds (super- and sub-hedging prices) that are typically very wide and do not provide a practical indication of a reasonable price. Therefore, further techniques have been developed to characterize fair and reasonable prices. One of the most prominent ones is indifferences pricing, developed first by Hodges and Neuberger \cite{HN89}; see the book \cite{Car09} (edited by Carmona) for a survey. The goal is to establish a threshold or reservation price, at which a potential buyer is indifferent between buying the claim for this price or not buying it, while in either case allowing for continuous trading in the underlying market. Originally developed in a framework of utility maximization, it has been extended to other criteria, such as forward performance measures \cite{LSZ12} and risk measures \cite{BEK09, KS05}; it has also be used to construct horizon-independent risk measures \cite{ZZ10, CHLZ19}. The formulation via convex dynamic risk measures is in particular attractive, as it is not only nicely connected to the theory of Backward Stochastic Differential Equations (BSDEs), but relies on concepts widely used in the industry and in line with the current regulatory framework \cite[Section 12]{Hul18}.

Monetary risk measures were first introduced by Artzner, Delbaen, Eber and Heath \cite{ADEH99} in the form of coherent measures of risk and then generalized to convex risk measures by F\"{o}llmer and Schied \cite{FS02} and Fritelli and Rosazza Gianin \cite{FRG05}. To capture the time evolution of risk, conditional and dynamic versions have been developed (see, e.g., Cheridito, Delbean and Kupper \cite{CDK04}), and a close connection to Backward Stochastic Differential Equations has been established, see, e.g.,  Peng \cite{Pen04} and Rosazza Gianin \cite{RG06}. Most recently fully dynamic convex risk measures became a focus point, i.e., risk measures where both time parameters, the horizon and the evaluation time are considered dynamic, see \cite{BNDN20} and \cite{RGDN24}. A crucial property in this context is time consistency; it has been studied extensively, see, e.g., the overview papers by Acciaio and Penner \cite{AP11} and Bielecki, Cialenco and Pitera \cite{BCP17} in discrete time and Rosazza Gianin and DiNunno \cite{RGDN24} for fully time-consistent risk measures in continuous time.

The use for indifference pricing was pioneered by Xu \cite{Xu06} and Barrieu and El Karoui \cite{BEK04}, generalizing earlier results by Rouge and El Karoui \cite{REK00} for exponential utility, which can be reinterpreted as an entropic risk measure, and studied systematically in a general setting by Kl\"{o}ppel and Schweizer \cite{KS05} and Barrieu and El Karoui \cite{BEK09}. Applications to stochastic volatility models and the inverse problem of calibrating risk measures to market data has been studied in Sircar and Sturm \cite{SS15} and Kumar \cite{Kum15}, see also \cite{ES10}.

Indifference pricing for American options appeared first in the study of transaction costs by Davis and Zariphopoulou \cite{DZ95}, extending work by Davis, Panas and Zariphpoulou \cite{DPZ93} for the European case. The literature encompasses \cite{BZ14, Dam06, GK22, Kue02, LS09, LSZ12, OZ03, WD09, YLY15, Zak05} who use utility functions, stochastic differential utilities and forward performance processes as criteria for indifference, we are not aware of any use of dynamic risk measures for the American case (despite the use of Reflected Backward Stochastic Differential Equations for American options dating back to El Karoui, Pardoux and Quenez \cite{EKPQ97}, based on theoretical results in \cite{EKKPPQ97}). For a full discussion of the papers of indifference pricing of American options, we refer to Section \ref{sec:lit-comp}.

The conceptual challenges in implementing indifference pricing stem from the fact that option buyer and seller share different perspectives and the seller's pricing consideration has to take into account the buyer's exercise decision. This calls for a careful consideration of which strategies are admissible - something that has been carefully studied in the case of finitely many payoff options by K\"{u}hn \cite{Kue02} and whose perspective we amend with a general counterpart. The pricing by the buyer is slightly more straightforward, however one has to carefully consider at which time one imposes indifference (at the exercise time or maturity?) and how this connects to the notion of arbitrage. In our opinion clarity is best achieved when considering a general case, where one allows buyer and seller to work in different filtrations, reflecting difference in access to market information.

We then specialize to the setting of stochastic volatility models, following the general setup of \cite{SS15} and \cite{Kum15}. We find that the American indifference prices can be described through Backward Stochastic Differential Equations reflected at Backward Stochastic Differential equations (BSDE-R-BSDEs for short), i.e., Reflected Backward Stochastic Differential Equations (RBSDEs) for which the reflection boundary is given by a BSDE itself. This structure reflects that risk mitigation through trading in the market continues after the exercise of the option -- we observe that the reflecting boundary encapsulates, in addition to the exercise value of the option,  the risk of holding a zero contract from exercise time to maturity (cf. Remark \ref{rem:bdry}). Also, in this setting, the proof of the characterization of the seller's price requires a substantial amount of work, while the characterization of the buyer's price is more straightforward. We illustrate our findings by means of a numerical example, the pricing of an American put option, for which we use deep learning methods to simulate the BSDE-R-BSDEs.

New methods on solving BSDEs and RBSDEs through deep learning have been proposed recently and show much promise. Initially, E, Han and Jentzen developed the DEEP BSDE Solver as a forward scheme \cite{EHJ17, EHJ18}, able to tackle high dimensional problems. A global backward scheme, the Backward Deep BSDE Method was developed by Wang, Chen, Sudjianto, Liu and Shen \cite{WCSLS18} and studied in detail by Gao, Gao, Hui and Zhu \cite{GGHZ23}, who also analyze the convergence in the case of Lipschitz drivers and show how to use the scheme for Bermudan options. Hur\'{e}, Pham and Warin \cite{HPW20} introduced schemes based on dynamic programming, namely the Deep Backward Dynamic Programming schemes, containing in particular one for RBSDEs on which our simulations rely. Recent overview articles on work in this general direction can be found in the papers by E, Jentzen and Han \cite{EJH22} as well as Chessari, Kawai, Shinozaki and Yamada \cite{CKSY23}.

The paper is structured as follows. Section \ref{sec:pricing} provides a general setup for the risk-indifference pricing of claims of American style, with minimal conditions on the risk measures involved and considering potentially different information available to buyers and sellers. Section \ref{sec:stochvol} studies the risk-indifference pricing of American claims in stochastic volatility models via BSDE-R-BSDEs, and Section \ref{sec:deep} provides a numerical implementation via deep learning. Section \ref{sec:conc} concludes by reviewing the contributions of the current work.

\section{Risk-indifference pricing}\label{sec:pricing}

To elucidate the conceptual ideas at the heart of our problem, we initially consider general risk measures and American contingent claims in a general semimartingale setup, in which the buyer and seller have access to different information represented by different filtrations. We first explain the market setup and the notion of fully dynamic risk measures, before defining the indifference prices from a buyer's and seller's perspective and showing that they are free of arbitrage. We conclude this general section by reviewing our methodology against the backdrop of the existing literature on indifference pricing of American claims.

\subsection{Setup}

We consider a filtered probability space $\ofgp$ with complete and right-continuous filtration $\mathbb{G} = (\mathcal{G}_t)_{0 \leq t \leq T}$ assuming $\mathcal{G}_T = \mathcal{F}$. We consider a financial market consisting of risky assets modeled by a $d$-dimensional $\mathbb{G}$-continuous semimartingale $\hat{S}$ and a riskless asset modeled by a continuous, non-decreasing $\mathbb{G}$-adapted process $B$.

To exclude arbitrage in the sense of \textit{no free lunch with vanishing risk} (see \cite{DS06}), we require the existence of a probability measure $\Q$ under which the discounted asset process $S=\hat{S}B^{-1}$ is a local $\mathbb{G}$-martingale. The information of the buyer of the option is given by the complete and right-continuous filtration $\mathbb{F}^\text{buy}$, while that of the seller is $\mathbb{F}^\text{sell}$;   $\mathbb{F}^{S,B} \subseteq \mathbb{F}^\text{sell}, \mathbb{F}^\text{buy} \subseteq \mathbb{G}$, where $\mathbb{F}^{S,B}$ denotes the (augmented) natural filtration generated by risky and riskless assets. For general statements that do not require one particular filtration, we will use the generic $\mathbb{F}$. This setup allows us to model situations where the seller and buyer rely on additional, potentially differential, private information (and randomization), while precluding arbitrage opportunities.

Trading in the market is continuous and self-financing. A portfolio $\hat{V}$ at time $t$ is given by holding $h_t$ shares of $\hat{S}$ and $\eta_t$ shares of $B$, hence
\[
	\hat{V}_t = h_t \hat{S}_t + \eta_t B_t
\]
for all $t \in [0,T]$ where $h, \eta \in \tilde{\mathcal{H}}_\text{buy/sell}$, the set of $\mathbb{F}^{\text{buy/sell}}$-predictable strategies, for the buyer (resp. seller). For the discounted portfolio dynamics $V = \hat{V}B^{-1}$ we have therefore (starting with zero initial wealth), $V_t = \int_0^t h_s \, dS_s$ and use $V_{s,t} = V_t -V_s$ to denote portfolios coming from hedging from time $s$ on. To mark the dependence of the portfolio on the hedging strategy $h$ we will use a superscript, writing $V^h$. To avoid doubling strategies, we will restrict ourselves to strategies with bounded wealth. The set of bounded claims hedgeable at no cost from from time $s$ to $t$ is
\[
\mathcal{C}^\text{buy/sell}_{s,t} := \biggl\{V^h_{s,t} = \int_s^t h_u \, dS_u \, : \, \bigl\vert V_{s,u}^h \bigr\vert  \leq M \text{ for some constant } M \in \mathbb{R}, h \in \tilde{\mathcal{H}}_\text{buy/sell} \text{ and all } u \in (s,t]			\biggr\} .
\]
The strategies leading to these claims we denote by $\mathcal{H}_\text{buy/sell} := \bigl\{ h \in \tilde{\mathcal{H}}_\text{buy/sell} \, : \, V^h_{0,T} \in \mathcal{C}^\text{buy/sell}_{0,T}\bigr\}$.

The goal of the paper is to determine a price for an American style contingent claim $\hat{\xi}$, i.e., an almost surely continuous, bounded and $\mathbb{F}^{S,B}$-adapted process that can be exercised by the buyer at an $\mathbb{F}^\text{buy}$-stopping time $\tau$, paying $\hat{\xi}_\tau$ where $\tau$ is a stopping time either on $[0,T]$ (American style claim) or a closed countable subset of it (Bermudan style claim). We denote the set of all $\mathbb{G}$-stopping times larger than or equal to $t$ by $\mathcal{T}_{t,T}$ and write $\mathcal{T}_{t,T}^{\text{buy/sell}}$ for the stopping times measurable with respect to the buyer's (resp. seller's) filtration. In line with the notation introduced above we write $\xi = \hat{\xi}B^{-1}$ for the discounted claim.

The absence of arbitrage in the financial market consisting of $\hat{S}$ and $B$ guarantees the existence of an arbitrage free price for the derivative $\hat{\xi}$. However, unless the market is complete (i.e., the local martingale measure $\Q$ is unique), the no-arbitrage principle does not provide a unique price, but rather a (practically often very large) interval of arbitrage free prices.

The current article is concerned with the choice of a \textit{reasonable} price among the multitude of arbitrage free prices. Our pricing method is based on the principle of indifference, i.e., we determine the price for which the buyer (resp. seller) is indifferent between buying the option for this price (and hedging in the underlying market) or not buying the option at all (but still hedging in the market).

The indifference criterion we will use is that of indifference in risk. For that purpose we introduce the notion of fully dynamic convex risk measures (see \cite{BNDN20}):
	
\begin{definition}
	A family of mappings $\rho_{s, t}: \loftp \to \lofsp$,  with times $s, \,  t$ satisfying $0 \leq s \leq  t \leq T$,  is called a (strongly) time-consistent fully dynamic convex risk measure if it satisfies the following properties:
	\begin{itemize}
		\item[A)] \textbf{Monotonicity}: For all $\xi_1, \, \xi_2 \in \loftp$, $\xi_1 \geq \xi_2$ $\mathbb{P}$-a.s., and for all $s \leq t$,
		\[
			\rho_{s,t} (\xi_1) \leq \rho_{s,t} (\xi_2) \quad \mathbb{P}-a.s.
		\]
		\item[B)] \textbf{Cash-Invariance}: For all $\xi \in \loftp$, $m_s \in \lofsp$, and for all $s \leq t$,
		\[
			\rho_{s,t} (\xi + m_s) = \rho_{s,t} (\xi) - m_s \quad \mathbb{P}-a.s.
		\]
		\item[C)] \textbf{Convexity}: For all $\xi_1, \, \xi_2 \in \loftp$, $\lambda \in [0,1]$, and for all  $s \leq t$,
		\[
			\rho_{s,t} \bigl(\lambda \xi_1 + (1-\lambda) \xi_2\bigr) \leq \lambda \rho_{s,t} (\xi_1) + (1-\lambda) \rho_{s,t} (\xi_2)  \quad \mathbb{P}-a.s.
		\]
		\item[D)] \textbf{Time-consistency}: For all $\xi_1, \, \xi_2 \in L^\infty(\Omega,\mathcal{F}_u,  \mathbb{P})$ and $s \leq t \leq u$,
		\[
			\rho_{t,u} \bigl(\xi_1\bigr) = \rho_{t,u} \bigl(\xi_2\bigr) \quad \Longrightarrow \quad \rho_{s,u} \bigl(\xi_1\bigr) = \rho_{s,u} \bigl(\xi_2\bigr)  \quad \mathbb{P}-a.s.
		\]
	\end{itemize}
	There is also a stronger form of time consistency,
	\begin{itemize}
		\item[D')] \textbf{Strong time-consistency}: For all $\xi \in L^\infty(\Omega,\mathcal{F}_u,  \mathbb{P})$ and $s \leq t \leq u$,
		\[
			\rho_{s,t} \bigl( -\rho_{t,u} (\xi)\bigr) = \rho_{s,u} (\xi)  \quad \mathbb{P}-a.s.
		\]
		This property clearly implies D), so it is a stronger assumption.
	\end{itemize}
\end{definition}

An important additional property that follows from this definition (see \cite[Section 3]{KS05}) is 
\begin{itemize}
	\item[F)] \textbf{$\mathcal{F}_t$-regularity}: For all $\xi_1, \xi_2 \in \loftp$, $A \in \mathcal{F}_s$ and $s \leq t$,
	\[
		\rho_{s,t} \bigl(\xi_1 \ind_A + \xi_2 \ind_{A^c}\bigr) = \rho_{s,t} (\xi_1\bigr)\ind_A + \rho_{s,t} (\xi_2\bigr)\ind_{A^c}.
	\]
\end{itemize}

We note that for a time-consistent fully dynamic convex risk measure $\rho$, the residual risk after partial mitigation is also a time-consistent fully dynamic convex risk measure, see \cite[Section 4]{KS05}, and we write, for the buyer's and seller's perspective respectively,
\begin{align*}
	\hat{\rho}_{s,t}(\zeta) &:= \essinf_{C \in \mathcal{C}^\text{buy}_{s,t}} \rho_{s,t} (\zeta + C), \quad \zeta \in \loftp,\\
	\check{\rho}_{s,t}(\zeta) &:= \essinf_{C \in \mathcal{C}^\text{sell}_{s,t}} \rho_{s,t} (\zeta + C), \quad \zeta \in \loftp,
\end{align*}
for the residual risk from the buyers (resp. sellers) perspective. Specifically, we do not require that the risk measures are normalized and note that even if we assume normalization of $\rho$, i.e., $\rho_{s,t}(0)=0$ for all $0 \leq s\leq t \leq T$, this property in general does not carry over to $\hat{\rho}$ and $\check{\rho}$. Furthermore, we wish to point out that the essential infima (as all further essential infima and suprema in the text) are not required to be attained, i.e., there does not have to exist an optimizer $C^* \in \mathcal{C}^\text{buy/sell}_{s,t}$.

\subsection{Seller's Price}\label{sec:sell}

We start with the discussion of the seller's price. This is the more intricate problem as the exercise of the option is done by the \textit{buyer}, not the seller. Thus, the seller has to take into account any potential exercise action by the buyer. She cannot look at indifference at the time of exercise, as this time is not known to her -- it is a stopping time measurable with respect to the filtration $\mathbb{F}^\text{buy}$ but might not be measurable with respect to $\mathbb{F}^\text{sell}$. Additionally, the seller does not know the exercise time as a random variable, but only its realization along the realized path of assets. Moreover, the hedging strategy should, of course, be predictable in the appropriate filtration ($\mathbb{F}^\text{sell}$ enlarged by the realized exercise time) to make it practically implementable. The literature contains several definitions of the seller's price which, however, fall short of these requirements (we discuss details in Section \ref{sec:lit-comp}). We thus start setting up our definition from scratch:

Let $\mathcal{H}_\text{sell}$ be the set of predictable processes with respect to $\mathbb{F}^\text{sell}$, the information available to the seller. It represents the trading strategies that the seller can implement over the time interval $[0, T]$ without using the information about the actual exercise. From this set we now construct the set of all strategies that the trader can perform if they take the information of the exercise into account.

\begin{definition}
	Let $\mathcal I$ denote the set of possible exercise times of the option, which can either be $[0, T]$ or a closed countable subset of $[0, T]$. We define
	\begin{equation}\label{eq:Hprime}
		\mathcal H^\prime_\text{sell} := \bigl\{H: \mathcal I\times [0,T] \times \Omega \to \mathbb R:  \forall t\in \mathcal I, \, H(t, \cdot, \cdot)\in 							\mathcal{H}_{\text{sell}} \text{ and }   H(t_1, s , \cdot)=H(t_2, s, \cdot) \text{ for } s\leq t_1\wedge t_2\bigr\}.
	\end{equation}
	Additionally, all $H\in \mathcal{H}^\prime_{\text{sell}}$ must be right-continuous in the first variable, i.e.,
	\[
		\lim_{u \downarrow t}H(u, \cdot, \cdot)= H(t, \cdot, \cdot),
	\]
	when $t$ is a left limit point in $\mathcal I$.
\end{definition}

We refer to the set $\mathcal{H}^\prime_{\text{sell}}$ as the set of \textit{strategy selection functions}, as each $H\in \mathcal{H}^\prime_{\text{sell}}$ selects an initial strategy from $\mathcal{H}$ that gets updated at the point of exercise. Given any choice of strategy selection function, $H\in \mathcal H^\prime_\text{sell}$, we will denote by $h^\tau$ the strategy followed when $\tau \in \mathcal{T}_{0,T}$ is the exercise time, in other words, 
\[
	h^\tau_t(\omega):=H\bigl(\tau(\omega), t,\omega\bigr) \quad \text{ for } 0\leq t\leq T, \, \omega\in \Omega.
\]

Observe that, by definition, the strategies defined through selection functions in $\mathcal{H}^\prime_{\text{sell}}$ have the following ``non-anticipativity" property. For any $H\in \mathcal{H}^\prime_{\text{sell}}$ and $\tau_1, \tau_2\in \mathcal T_{0,T}$, 
\[
	H\bigl(\tau_1(\omega), s, \omega\bigr) = H\bigl(\tau_2(\omega), s, \omega\bigr), \text{ for }s\,\leq \, \tau_1(\omega)\wedge \tau_2(\omega).
\]
As a consequence, we get
\begin{equation}\label{eq:htau}
	h^\tau_t(\omega) = H\bigl(\tau(\omega), t, \omega\bigr)= H(T, t, \omega)\ind_{\{\tau(\omega)\geq t\}}+ H\bigl(\tau(\omega)\wedge t,t,\omega\bigr)\ind_{\{\tau(\omega)<t\}},
\end{equation}
for any $H\in \mathcal H^\prime_\text{sell}$. This decomposition implies that $h^\tau$ is a predictable process with respect to the filtration $\mathbb{F}^{\text{sell},\tau}$, $\mathcal{F}^{\text{sell}, \tau}_t = \bigcap_{\varepsilon >0}\mathcal{K}_{t+\varepsilon}$,  $\mathcal{K}_{t} = \mathcal{F}^\text{sell}_t \vee \sigma(\tau \wedge t)$, see \cite[Section 9]{Nik06}. We note that our construction follows in spirit the arguements of \cite[Section 2]{Kue02}; we adapted it to make it operational in the continuous choice setting

\begin{proposition}
	The process $h^\tau$ is $\mathbb{F}^{\text{sell},\tau}$-predictable.
\end{proposition}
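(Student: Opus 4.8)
The plan is to read off predictability directly from the explicit decomposition \eqref{eq:htau}, treating separately the two summands, whose supports are the disjoint sets $\{(t,\omega):t\le\tau(\omega)\}$ and $\{(t,\omega):t>\tau(\omega)\}$. Throughout I use that $\mathbb{F}^{\text{sell}}\subseteq\mathbb{F}^{\text{sell},\tau}$ and that $\tau$ is an $\mathbb{F}^{\text{sell},\tau}$-stopping time by construction of the progressive enlargement, so that every process in $\mathcal{H}_{\text{sell}}$ (in particular $H(T,\cdot,\cdot)$ and each $H(a,\cdot,\cdot)$ for fixed $a\in\mathcal I$) is already $\mathbb{F}^{\text{sell},\tau}$-predictable, and that $\{\tau<t\}\in\mathcal{F}^{\text{sell},\tau}_t$ for every $t$. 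Since the sum of two predictable processes is predictable, it suffices to treat each summand.

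The first summand, $(t,\omega)\mapsto H(T,t,\omega)\ind_{\{\tau(\omega)\ge t\}}$, is the product of the $\mathbb{F}^{\text{sell},\tau}$-predictable process $H(T,\cdot,\cdot)$ with the process $(t,\omega)\mapsto\ind_{\{\tau(\omega)\ge t\}}$; the latter is left-continuous in $t$ and $\mathcal{F}^{\text{sell},\tau}_t$-measurable for each $t$ (as $\{\tau\ge t\}$ is the complement of $\{\tau<t\}$), hence predictable, and the product of two predictable processes is predictable.

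The real work is the second summand, which on its support $\{t>\tau(\omega)\}$ equals $(t,\omega)\mapsto H(\tau(\omega),t,\omega)$. I would first handle the Bermudan case, $\mathcal I$ countable: here $H(\tau(\omega),t,\omega)\ind_{\{\tau(\omega)<t\}}=\sum_{a\in\mathcal I}H(a,t,\omega)\,\ind_{\{\tau(\omega)=a\}}\ind_{\{t>a\}}$, where for fixed $a$ the factor $H(a,\cdot,\cdot)$ is $\mathbb{F}^{\text{sell},\tau}$-predictable and $(t,\omega)\mapsto\ind_{\{\tau(\omega)=a\}}\ind_{\{t>a\}}$ is left-continuous in $t$ and, for $t>a$, $\mathcal{F}^{\text{sell},\tau}_t$-measurable (because $\{\tau=a\}=\{\tau\wedge(t+\varepsilon)=a\}\in\mathcal{K}_{t+\varepsilon}$ for all small $\varepsilon>0$ when $t>a$); hence each term is predictable and so is the countable sum, which at each $(t,\omega)$ reduces to a single term. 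For $\mathcal I=[0,T]$ I would pass to this case via the dyadic overapproximations $\tau_n:=2^{-n}T\lceil 2^n\tau/T\rceil$, which satisfy $\tau_n\downarrow\tau$, are $\mathbb{F}^{\text{sell},\tau}$-stopping times valued in a finite grid, and, being deterministic functions of $\tau$ with $\tau_n\ge\tau$, generate $\mathbb{F}^{\text{sell},\tau_n}\subseteq\mathbb{F}^{\text{sell},\tau}$. By the Bermudan step each process $(t,\omega)\mapsto H(\tau_n(\omega),t,\omega)\ind_{\{\tau_n(\omega)<t\}}$ is $\mathbb{F}^{\text{sell},\tau}$-predictable, and right-continuity of $H$ in its first argument yields the pointwise convergence $H(\tau_n(\omega),t,\omega)\ind_{\{\tau_n(\omega)<t\}}\to H(\tau(\omega),t,\omega)\ind_{\{\tau(\omega)<t\}}$ as $n\to\infty$ (on $\{t\le\tau\}$ both sides vanish since $\tau_n\ge\tau\ge t$, while on $\{t>\tau\}$ one has $\tau_n<t$ eventually and $H(\tau_n,t,\cdot)\to H(\tau,t,\cdot)$). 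Since a pointwise limit of predictable processes is predictable, the second summand is predictable, and adding the first summand gives the claim.

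The step I expect to be the main obstacle is precisely this limiting argument in the American case: one must choose the approximating stopping times so that they dominate $\tau$ — so that right-continuity of $H(\cdot,t,\omega)$ can be invoked — while remaining stopping times for, and generating sub-filtrations of, $\mathbb{F}^{\text{sell},\tau}$, and one must check that the indicators $\ind_{\{\tau_n<t\}}$ do not obstruct convergence on $\{t\le\tau\}$; this is exactly where the right-continuity requirement built into the definition of $\mathcal H'_{\text{sell}}$ is needed. Alternatively, one may shortcut the whole argument by invoking the characterization of predictable processes of a progressively enlarged filtration from \cite[Section 9]{Nik06}, of which \eqref{eq:htau} is precisely the canonical representation: an $\mathbb{F}^{\text{sell}}$-predictable process on $\{t\le\tau\}$ glued to a process of the form $\widetilde H(\tau(\omega),t,\omega)$ with $\widetilde H(a,\cdot,\cdot)$ predictable and suitably right-continuous in $a$ on $\{t>\tau\}$.
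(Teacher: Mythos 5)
Your proof is correct and follows essentially the same route as the paper: both start from the decomposition \eqref{eq:htau}, treat the pre-exercise part via the $\mathbb{F}^{\text{sell}}$-predictability of $H(T,\cdot,\cdot)$, and obtain predictability of the post-exercise part by approximating $\tau$ from above by discretely valued stopping times, using that $\{\tau_n = a\}$ is known at times $t>a$ in the enlarged filtration, invoking right-continuity of $H$ in its first argument, and passing to a pointwise limit of predictable processes. Your separate, exact treatment of the Bermudan case and your dyadic grid (versus the paper's single $\mathcal{I}$-adapted grid covering both cases at once) are only cosmetic variations.
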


\begin{proof}
	By definition, $H(T, \cdot, \cdot)$ is $\mathbb{F}^{\text{sell}}$-predictable, and using \eqref{eq:htau} we can write
	\[
		h^\tau_t(\omega) = H(T, t, \omega)+ \Bigl(H\bigl(\tau(\omega),t,\omega\bigr)- H(T, t, \omega)\Bigr)\ind_{\{\tau(\omega)<t\}},
	\]
	so it remains to show that 
	\[
		I_t:= \Bigl(H\bigl(\tau(\omega),t,\omega\bigr)- H(T, t, \omega)\Bigr)\ind_{\{\tau(\omega)<t\}}
	\]
	is $\mathbb{F}^{\text{sell}, \tau}$-predictable. We prove this by approximating $I_t$ by a sequence of $\mathbb{F}^{\text{sell}, \tau}$-predictable processes as 			follows.

	Define a sequence of stopping times $\tau_n$ by setting $\tau_n(\omega) = s_k^n := \sup\bigl\{ s \in \mathcal{I} \, : \, s< (k+1)2^{-n} \wedge T\bigr\}$ if $k2^{-n} 		\leq \tau(\omega) < (k+1)2^{-n}$, for $k=0, \hdots, \lfloor 2^n T\rfloor$. Then $\tau_n$ are $\mathbb{F}^\text{buy}$-stopping times taking finitely many values with 		$\tau_n \downarrow \tau$. Define moreover
	\[
		I^{(n)}_t:=  \sum_{k=0}^{\lfloor 2^nT\rfloor} \tilde{H}\bigl(\tau_n(\omega),t,\omega\bigr) \ind_{\{\tau_n(\omega)= s_k^n\}}
		\ind_{\{(k+1)2^{-n}<t\}},
	\]
	where $\tilde{H}\bigl(\tau_n(\omega),t,\omega\bigr) = H\bigl(\tau_n(\omega),t,\omega\bigr)- H(T, t, \omega)$ and observe that $I^{(n)}_t$ is a sum of
	$\mathbb{F}^{\text{sell}, \tau}$-predictable processes: By definition of $H\in \mathcal H^\prime_\text{sell}$, $\tilde{H}\bigl(s_k^n,t,\omega\bigr) $ is an
	$\mathbb{F}^{\text{sell}}$-predictable process (and hence $\mathbb{F}^{\text{sell},\tau}$-predictable) for each $k$. Moreover, for each $k$,
	\[
		\ind_{\{\tau_n(\omega)=s_k^n\}}\ind_{\{(k+1)2^{-n}<t\}} = \ind_{A^{(n)}_k}(t,\omega),
	\]
	where $A^{(n)}_k:= ((k+1)2^{-n}, T] \times \{\tau_n = s_k^n  \}\subset [0,T]\times \Omega$  is a set in the predictable $\sigma$-algebra $\mathcal{P}						(\mathbb{F}^{\text{sell},\tau})$ with respect to the filtration $\mathbb{F}^{\text{sell},\tau}$, as $\bigl\{\tau_n = s_k^n\bigr\} =\bigl\{k2^{-n} \leq \tau(\omega) 			\wedge (k+1)2^{-n}<(k+1)2^{-n}\bigr\}$ is $\sigma(\tau\wedge (k+1)2^{-n})$ measurable and hence $\mathbb{F}^{\text{sell},\tau}_{(k+1)2^{-n}}$ measurable.

	For each $(t, \omega)\in [0, T]\times \Omega$, 
	\begin{align*}
		&\phantom{==}I_t(\omega) - I^{(n)}_t(\omega)= \sum_{k=0}^{\lfloor 2^nT\rfloor} \Bigl[\tilde{H}\bigl(\tau(\omega),t,\omega\bigr)
		\ind_{\{\tau(\omega)<t\}}- \tilde{H}\bigl(\tau_n(\omega),t,\omega\bigr) \ind_{\{(k+1)2^{-n}<t\}}\Bigr]
		\ind_{\{k2^{-n} \leq \tau(\omega)< (k+1)2^{-n} \wedge T\} }\\
		& = \sum_{k=0}^{\lfloor 2^nT\rfloor} \Bigl[\tilde{H}\bigl(\tau(\omega),t,\omega\bigr)
		\bigl( \ind_{\{\tau(\omega)<t\leq (k+1)2^{-n}\}} + \ind_{\{(k+1)2^{-n}<t\}} \bigr) - \tilde{H}\bigl(\tau_n(\omega),t,\omega\bigr)
		\ind_{\{(k+1)2^{-n}<t\}}\Bigr] \ind_{\{k2^{-n} \leq \tau(\omega)< (k+1)2^{-n} \wedge T \} }\\
		& = \sum_{k=0}^{\lfloor 2^nT\rfloor} \Bigl(\tilde{H}\bigl(\tau(\omega),t,\omega\bigr)- \tilde{H}\bigl(\tau_n(\omega),t,\omega\bigr) \Bigr)
		\ind_{\{(k+1)2^{-n}<t\}}\ind_{\{k2^{-n} \leq \tau(\omega)< (k+1)2^{-n} \wedge T\} } \\
		&\qquad \qquad+ \sum_{k=0}^{\lfloor 2^nT\rfloor}  \tilde{H}\bigl(\tau(\omega),t,\omega\bigr)\ind_{\{k2^{-n} \leq \tau(\omega)< (k+1)2^{-n} \wedge T \} }  \ind_{\{\tau(\omega)<t\leq (k+1)2^{-n}\}} .
	\end{align*}
	Thus, $I_t(\omega)$ is the pointwise limit of $I^{(n)}_t(\omega)$ for each $(t, \omega)$ as $n\to\infty$, since  $\tau_n(\omega)\downarrow \tau(\omega)$ and by 			right-continuity of $s\mapsto H(s,t,\omega)$. 
\end{proof}

Note that while all process in $\mathcal{H}^\prime_\text{sell}$ are $\mathbb{F}^{\text{sell},\tau}$-predictable, not all $\mathbb{F}^{\text{sell},\tau}$-predictable processes are contained in $\mathcal{H}^\prime_\text{sell}$. Specifically, $\mathcal{H}^\prime_\text{sell}$ contains only the processes that are depending on the realization of the stopping time, not on the random variable itself.

Based on this clarification on the informational structure of the problem, we turn our attention to the seller's risk-indifference price for an American put. Here, the relevant set of hedging strategy selection functions in this case is $\mathcal{H}^\prime_{\text{sell}}$ as the seller has no prior knowledge of the exercise time of the put; she can only choose from hedging strategies that get updated after the exercise time is observed, thus $\mathcal{H}^\prime_{\text{sell}}$ is the set of strategy selection functions that is relevant.  For any choice of strategy $H\in \mathcal{H}^\prime_{\text{sell}}$, the seller will follow the strategy  $H(T, t, \omega)$ until $\tau(\omega)$, and after $\tau(\omega)$, she switches to the strategy $H\bigl(\tau(\omega),t,\omega\bigr)$ which is $\tau(\omega)$ dependent. We emphasize that this strategy $h_\cdot^\tau(\omega)$ is only $\tau(\omega)$ dependent and not $\tau$ dependent, i.e., depends only on the realization of the random variable in the specific scenario, not on $\tau$ as a random variable (cf. \cite[Section 2]{Kue02} for a discussion in the context of decision functions). This is a necessary distinction as the seller only observes $\tau(\omega)$ and may not glean any further knowledge of $\tau$. Given that strategy, the seller has to minimize the risk by considering the worst case over all stopping times. As she does not have insight into the buyer's information structure, she cannot optimize over $\mathcal{T}_{t,T}^\text{buy}$ but has to stick to the information available to her, i.e., use $\mathcal{T}_{t,T}^\text{sell}$.

Thus, for initial wealth $x$ and a time-consistent fully dynamic risk measure $\rho$, by definition of indifference pricing the seller's price $P^\text{sell}_t$ has to satisfy at time $t$
\begin{equation*}
	\check{\rho}_{t,T}(x)=\essinf_{H\in \mathcal H^\prime_\text{sell}}\esssup_{\tau\in \mathcal{T}_{t,T}^\text{sell}}\, \rho_{t,T}\biggl(x+\int_t^T h_s^\tau \,  dS_s -
	\xi_\tau+P_t^\text{sell}\biggr).
\end{equation*}
Solving for $P^\text{sell}$ and using cash invariance we get
\begin{equation}\label{S}\tag{S}
	P_t^\text{sell} = \essinf_{H\in \mathcal H^\prime_\text{sell}} \esssup_{\tau\in \mathcal{T}_{t,T}^\text{sell}}\, \rho_{t,T}\biggl(\int_t^T h_s^\tau \, dS_s -
	\xi_\tau\biggr) - \check{\rho}_{t,T}(0).
\end{equation}

\subsection{Buyer's Price}\label{sec:buy}

We are now turning our attention to the buyer's perspective. The buyer of the American claim has the right, but not the obligation to exercise the option at any time before maturity $T$. Therefore, she can decide on the $\mathbb{F}^\text{buy}$-stopping time $\tau$, which will provide her with a payout of $\xi_\tau$. Once the buyer with initial capital $x$ has bought the option at time $t$ for a price $P^\text{buy}_t$, she will of course try to reduce the risk of her position -- thus determine the exercise time $\tau^\text{buy}$ by minimizing the risk through both, determining the optimal exercise time and optimal hedging in the market until maturity of the option contract, i.e.,
\begin{equation*}
	\essinf_{\tau \in \mathcal{T}_{t,T}^\text{buy}} \hat{\rho}_{t,T} \bigl(x+ \xi_\tau - P^\text{buy}_t\bigr) =
	\essinf_{\tau \in \mathcal{T}_{t,T}^\text{buy}}\essinf_{C \in \mathcal{C}_{t,T}^\text{buy}} \rho_{t,T} \Bigl(x+ \xi_\tau - P^\text{buy}_t + C\Bigr).
\end{equation*}

Note that this minimization is independent of $P^\text{buy}_t$ and $x$ due to the cash-invariance property. We also point out that as the exercise time here is decided by the buyer, the knowledge about the exercise time does not add information, thus the hedging strategies are indeed $\mathbb{F}^\text{buy}$-predictable processes $\mathcal{H}_\text{buy}$. As a consequence of indifference pricing, and using cash invariance, we get the explicit representation 
\begin{align}\label{B}\tag{B}
	P_t^\text{buy} = \hat{\rho}_{t,T} (0) -  \essinf_{\tau \in \mathcal{T}_{t,T}^\text{buy}} \hat{\rho}_{t,T} \bigl( \xi_\tau\bigr).
\end{align}

\begin{remark}\label{rem:spread}
	If one considers the risk measures as being that of representative agents, one can understand $P_t^\text{buy}$ and $P_t^\text{sell}$ as bid and ask prices and their
	difference as bid-ask spread. In that case it is natural to assume that $\mathbb{F}^\text{buy} = \mathbb{F}^\text{sell}$ (e.g., both equal $\mathbb{F}^\text{S,B})$, so
	$\mathcal H^\prime_{\text{sell}} = \mathcal H^\prime_{\text{buy}} =: \mathcal H^\prime$ and $\check{\rho}$ and $\hat{\rho}$ agree. In this case, by choosing 
	$(\tau_n)$, $\tau_n \in \mathcal{T}_{t,T}^\text{buy}$, as a sequence of stopping times such that
	$ \hat{\rho}_{t,T} \bigl( \xi_{\tau_n}\bigr) \downarrow \essinf_{\tau \in \mathcal{T}_{t,T}^\text{buy}} \hat{\rho}_{t,T} \bigl( \xi_\tau\bigr)$, we get 
	\begin{align*}
		P_t^\text{sell} - P_t^\text{buy}  & = \Biggl(\essinf_{H\in \mathcal H^\prime}\esssup_{\tau\in \mathcal T_{t,T}}\, \rho_{t,T}\biggl(\int_t^T h_s^\tau \,  dS_s -
		\xi_{\tau}\biggr) - \hat{\rho}_{t,T} ( 0) \Biggr) - \Bigl( \hat{\rho}_{t,T} ( 0) - \essinf_{\tau \in \mathcal{T}_{t,T}^\text{buy}}
		\hat{\rho}_{t,T} \bigl( \xi_\tau\bigr)\Bigr)\\
		& = \lim_{n \to \infty}\Biggl(\essinf_{H\in \mathcal H^\prime}\esssup_{\tau\in \mathcal T_{t,T}}\, \rho_{t,T}\biggl(\int_t^T h_s^\tau \,  dS_s -\xi_{\tau}\biggr) - 
		\hat{\rho}_{t,T} ( 0) \Biggr) - \Bigl( \hat{\rho}_{t,T} ( 0) - \hat{\rho}_{t,T} \bigl( \xi_{\tau_n}\bigr)\Bigr)\\
		& \geq \lim_{n \to \infty} \Biggl(\essinf_{H\in \mathcal H^\prime}\, \rho_{t,T}\biggl(\int_t^T h_s^{\tau_n} \,  dS_s -\xi_{\tau_n}\biggr)  + \hat{\rho}_{t,T}
		\bigl( \xi_{\tau_n} \bigr) - 2\hat{\rho}_{t,T} ( 0) \\
		& = \lim_{n \to \infty}  \hat{\rho}_{t,T}\bigl(-\xi_{\tau_n}\bigr)  + \hat{\rho}_{t,T} \bigl( \xi_{\tau_n} \bigr) - 2\hat{\rho}_{t,T} ( 0) = \lim_{n \to \infty}  
		2 \Bigl( \frac{1}{2}\hat{\rho}_{t,T}\bigl(-\xi_{\tau_n}\bigr)  + \frac{1}{2}\hat{\rho}_{t,T} \bigl( \xi_{\tau_n} \bigr)\Bigr) - 2\hat{\rho}_{t,T} ( 0)\\
		& \geq \lim_{n \to \infty}  2 \hat{\rho}_{t,T}\Bigl(\frac{-\xi_{\tau_n}+ \xi_{\tau_n}}{2} \Bigr) - 2\hat{\rho}_{t,T} ( 0) =0, 
	\end{align*}
where the last inequality is by the convexity of the risk measure. This shows that the definition indeed yields a non-negative bid-ask spread.
\end{remark}

\subsection{Arbitrage}

We have to make sure that the notions of seller's and buyer's prices introduced above are free of arbitrage, i.e., a buyer/seller cannot make an arbitrage by buying/selling the option at the price $P^\text{buy/sell}$ and trading in the market. The next definition makes this notion precise.

\begin{definition}
	We define arbitrage from the buyer's and seller's perspective respectively.
	\begin{itemize}
	\item \label{def:sellersarbitrage}
		A price $p$ at time $t$ provides a \textit{seller's arbitrage opportunity} if there is a hedging strategy $\hat{H} \in \mathcal{H}^\prime_{\text{sell}}$ such that for 
		some amount $x < p$ and  for all stopping times $\tau \in \mathcal{T}_{t,T}^\text{sell}$ we have that
		\[
			x + \int \limits_{t}^{T} \hat{h}^\tau_{s} \,dS_{s} - \xi_{\tau} \geq 0,
		\]
		i.e., the seller can pocket the profit $p-x>0$ at time $t$ while being exposed to no risk of loss at time $T$.
	\item \label{def:buyersarbitrage}
		A price $p$ at time $t \in [0,T]$  provides a \textit{buyer's arbitrage opportunity} if there exists a hedging strategy $\hat{h} \in \mathcal{H}_{\text{buy}}$ together 
		with an exercise strategy $\tau \in \mathcal{T}^{\text{buy}}_{t,T}$ such that for some amount $x > p$, 
		\[
			-x + \int \limits_{t}^{T} \hat{h}_{s} \,dS_{s} + \xi_{\tau} \geq 0,
		\]
		i.e., the buyer can pocket the profit $x-p>0$ at time $t$ while being exposed to no risk of loss at time $T$.
	\end{itemize}
\end{definition}

\begin{proposition}\label{prop:arbitrage}
	The price defined by \eqref{S} is free of seller's arbitrage and the price defined by \eqref{B} is free of buyer's arbitrage.
\end{proposition}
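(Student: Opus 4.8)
The plan is to prove each of the two assertions by contradiction, using only monotonicity (A) and cash-invariance (B) of $\rho$ together with the $\mathcal{F}_t$-regularity property (F) noted after the definition; the latter makes the family $\{\rho_{t,T}(C) : C\in\mathcal{C}^{\text{buy}}_{t,T}\}$ (resp.\ with $\mathcal{C}^{\text{sell}}_{t,T}$) downward directed, so that the essential infimum defining $\hat{\rho}_{t,T}(0)$ (resp.\ $\check{\rho}_{t,T}(0)$) is attained as an almost sure decreasing limit $\rho_{t,T}(C_n)\downarrow\hat{\rho}_{t,T}(0)$ along a sequence $(C_n)$ in $\mathcal{C}^{\text{buy}}_{t,T}$ (resp.\ $\mathcal{C}^{\text{sell}}_{t,T}$). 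I would tacitly work with the versions of the hedging strategies obtained by stopping them when their wealth first reaches a large level, so that all stochastic integrals below have terminal values in the relevant $L^\infty$ and $\rho_{t,T}$ applies to them. In both cases the mechanism is identical: an arbitrage super-hedges the claim (held short by the seller, long by the buyer) for a capital strictly on the wrong side of the quoted price, and superimposing a near-optimal mitigation strategy $C_n$ on it produces an admissible competitor in \eqref{S} (resp.\ \eqref{B}) that drives the price down to (resp.\ up to) that capital, contradicting the strict inequality.

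I would treat the buyer's case first, as it is the lighter one. Assume toward a contradiction that $p=P^{\text{buy}}_t$ is a buyer's arbitrage, witnessed by $\hat h\in\mathcal{H}_{\text{buy}}$, $\tau\in\mathcal{T}^{\text{buy}}_{t,T}$ and $x>p$ with $\xi_\tau+\int_t^T\hat h_s\,dS_s\geq x$. Pick $(C_n)\subset\mathcal{C}^{\text{buy}}_{t,T}$ with $\rho_{t,T}(C_n)\downarrow\hat{\rho}_{t,T}(0)$. Since $C_n$ is a no-cost hedgeable claim whose generating strategy has wealth bounded from below, so is $C_n+\int_t^T\hat h_s\,dS_s$, hence it lies in $\mathcal{C}^{\text{buy}}_{t,T}$, and so, from $\hat{\rho}_{t,T}(\xi_\tau)=\essinf_{C\in\mathcal{C}^{\text{buy}}_{t,T}}\rho_{t,T}(\xi_\tau+C)$ together with monotonicity and cash-invariance,
\[
	\hat{\rho}_{t,T}(\xi_\tau)\;\leq\;\rho_{t,T}\Bigl(\xi_\tau+\int_t^T\hat h_s\,dS_s+C_n\Bigr)\;\leq\;\rho_{t,T}(x+C_n)\;=\;\rho_{t,T}(C_n)-x.
\]
Letting $n\to\infty$ gives $\hat{\rho}_{t,T}(\xi_\tau)\leq\hat{\rho}_{t,T}(0)-x$, whence, by \eqref{B},
\[
	P^{\text{buy}}_t=\hat{\rho}_{t,T}(0)-\essinf_{\sigma\in\mathcal{T}^{\text{buy}}_{t,T}}\hat{\rho}_{t,T}(\xi_\sigma)\;\geq\;\hat{\rho}_{t,T}(0)-\hat{\rho}_{t,T}(\xi_\tau)\;\geq\;x\;>\;p,
\]
contradicting $p=P^{\text{buy}}_t$.

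Turning to the seller, assume toward a contradiction that $p=P^{\text{sell}}_t$ is a seller's arbitrage, witnessed by $\hat H\in\mathcal{H}^\prime_{\text{sell}}$ and $x<p$ with $\int_t^T\hat h^\tau_s\,dS_s-\xi_\tau\geq -x$ for every $\tau\in\mathcal{T}^{\text{sell}}_{t,T}$. Write a minimizing sequence for $\check{\rho}_{t,T}(0)$ as terminal values of $\int_t^\cdot g^n_s\,dS_s$ with $g^n\in\mathcal{H}_{\text{sell}}$ and $\rho_{t,T}(C_n)\downarrow\check{\rho}_{t,T}(0)$, and set $\tilde H_n(u,s,\omega):=\hat H(u,s,\omega)+g^n_s(\omega)$. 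Because $g^n$ does not depend on the exercise variable $u$, \eqref{eq:Hprime} gives $\tilde H_n\in\mathcal{H}^\prime_{\text{sell}}$ (non-anticipativity and right-continuity in $u$ are inherited from $\hat H$), and \eqref{eq:htau} gives that its realized strategy for exercise time $\tau$ is $\hat h^\tau+g^n$, so $\int_t^T(\hat h^\tau_s+g^n_s)\,dS_s=\int_t^T\hat h^\tau_s\,dS_s+C_n$. Hence, for every $\tau\in\mathcal{T}^{\text{sell}}_{t,T}$, monotonicity and cash-invariance give
\[
	\rho_{t,T}\Bigl(\int_t^T(\hat h^\tau_s+g^n_s)\,dS_s-\xi_\tau\Bigr)=\rho_{t,T}\Bigl(\int_t^T\hat h^\tau_s\,dS_s-\xi_\tau+C_n\Bigr)\leq\rho_{t,T}(-x+C_n)=\rho_{t,T}(C_n)+x.
\]
Since the right-hand side is free of $\tau$, taking $\esssup_\tau$ and then using $\tilde H_n$ as a competitor in the $\essinf$ of \eqref{S} yields $P^{\text{sell}}_t+\check{\rho}_{t,T}(0)\leq\rho_{t,T}(C_n)+x$; letting $n\to\infty$ gives $P^{\text{sell}}_t\leq x<p$, contradicting $p=P^{\text{sell}}_t$.

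I expect the genuine work to be concentrated in the seller's case, namely in verifying that $\tilde H_n$ belongs to $\mathcal{H}^\prime_{\text{sell}}$ and that its realized strategies add up as claimed — this is exactly where the bookkeeping of the decomposition \eqref{eq:htau} and the $\mathbb{F}^{\text{sell},\tau}$-predictability structure established above is used in earnest. The remaining ingredients — the downward-directedness of $\{\rho_{t,T}(C) : C\in\mathcal{C}^{\text{buy}}_{t,T}\}$ and $\{\rho_{t,T}(C) : C\in\mathcal{C}^{\text{sell}}_{t,T}\}$ (standard, from property (F) and the stability of these classes under $\mathcal{F}_t$-pasting, cf.\ \cite{KS05}) and the boundedness reductions that let $\rho_{t,T}$ act on all the relevant arguments — are routine.
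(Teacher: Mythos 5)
Your proposal is correct and follows essentially the same route as the paper: in both cases the superhedging strategy of the putative arbitrage is superimposed on the hedging competitors, and monotonicity plus cash-invariance force $P^{\text{sell}}_t\leq x$ resp.\ $P^{\text{buy}}_t\geq x$, contradicting the strict inequality. The only difference is presentational — you unpack $\hat{\rho}_{t,T}(0)$ and $\check{\rho}_{t,T}(0)$ via explicit minimizing sequences $(C_n)$, whereas the paper works directly with the residual risk measures (a change of variables $H\mapsto H+\hat H$ in the seller's infimum and the invariance of $\hat\rho$ under adding hedgeable claims for the buyer) — and you are, if anything, slightly more explicit about the admissibility bookkeeping that the paper leaves tacit.
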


\begin{proof}
	We first show that $p = P_{t}^{\text{sell}}$ does not allow for seller's arbitrage opportunities. Suppose by contradiction that there exists a seller's arbitrage opportunity
	$\hat{H}$ in the sense of Definition \ref{def:sellersarbitrage}.  Then  
	\begin{align*} 
		\essinf \limits_{H \in \mathcal{H}^\prime_\text{sell}}\,\esssup \limits_{\tau \in \mathcal{T}_{t,T}^\text{sell}} \,
		\rho_{t,T} \Biggl ( x  + \int \limits_{t}^{T}h_{s}^\tau\,dS_{s} - \xi_{\tau} \Biggr) 
		&= \essinf \limits_{H \in \mathcal{H}^\prime_\text{sell}}\,\esssup \limits_{\tau \in \mathcal{T}_{t,T}^\text{sell}} \, 
		\rho_{t,T} \Biggl(x  + \int \limits_{t}^{T}h_{s}^\tau\,dS_{s} + \int \limits_{t}^{T}\hat{h}^\tau_{s} \,dS_{s} - \xi_{\tau} \Biggr) \\
		&\leq \essinf \limits_{H \in \mathcal{H}^\prime_\text{sell}}\,\esssup \limits_{\tau \in \mathcal{T}_{t,T}^\text{sell}} \, 
		\rho_{t,T} \Biggl(  \int \limits_{t}^{T}h_{s}^\tau\,dS_{s}  \Biggr) \\
		&\leq \essinf \limits_{h \in \mathcal{H}_\text{sell}}\, \esssup \limits_{\tau \in \mathcal{T}_{t,T}^\text{sell}} \, \rho_{t,T} \Biggl(  \int \limits_{t}^{T}h_{s}\,dS_{s}  \Biggr) \intertext{ (by considering only constant strategy selection functions $H(u,s)=h_s$ for $h\in \mathcal H$)}
&= \essinf \limits_{h \in \mathcal{H}_\text{sell}}\, \rho_{t,T} \Biggl(  \int \limits_{t}^{T}h_{s}\,dS_{s}  \Biggr) = \check{\rho}_{t,T}(0).
	\end{align*} 
	whence
	\begin{equation*} 
		P_{t}^{\text{sell}} = \essinf \limits_{H \in \mathcal{H}^\prime_\text{sell}}\,\esssup \limits_{\tau \in \mathcal{T}_{t,T}^\text{sell}} \, 
		\rho_{t,T} \Biggl( \int \limits_{t}^{T}h_{s}^\tau\,dS_{s} - \xi_{\tau} \Biggr) - \check{\rho}_{t,T}(0) \leq x,
	\end{equation*}
	which contradicts the assumption that $x < P_{t}^{\text{sell}}$ and therefore disproves the existence of a seller's arbitrage opportunity.

	Similarly, assume a buyer's arbitrage opportunity $(\hat{h},\hat{\tau})$ exists in the sense of Definition \ref{def:buyersarbitrage}.  Employing a similar argument as above, 
	we find
	\[
		\hat{\rho}_{t,T}(x) \geq \hat{\rho}_{t,T}\Biggl(\int_t^T \hat{h}_s \, dS_s + \xi_{\hat{\tau}} \Biggr) = \hat{\rho}_{t,T}\bigl(\xi_{\hat{\tau}} \bigr)  \geq  
		\essinf \limits_{\tau \in \mathcal{T}_{t,T}^\text{buy}}  \hat{\rho}_{t,T}\bigl(\xi_\tau\bigr).
	\]
	Thus, cash invariance implies 
	\[
		P_{t}^{\text{buy}} = \hat{\rho}_{t,T}(0) -   \essinf \limits_{\tau \in \mathcal{T}_{t,T}^\text{buy}}  \hat{\rho}_{t,T}\bigl(\xi_\tau\bigr) \geq x.
	\] 
	We therefore conclude $x \leq P_{t}^{\text{buy}}$, contradicting the original assumption that $x > P_{t}^{\text{buy}}$ and thereby disproving the existence of a 
	buyer's arbitrage.
\end{proof}

Note that in the case the buyer's and seller's filtration agree, Remark \ref{rem:spread} also implies that the price \eqref{S} is free of buyer's arbitrage and \eqref{B} of seller's arbitrage. In the case the filtrations differ, no such argument can be made as the strategies live in different domains.

\subsection{Comparison with the Existing Literature}\label{sec:lit-comp}

The literature on indifference pricing of American options is long. Based on early results on European options in \cite{DPZ93}, Davis and Zariphopolou \cite{DZ95} explore utility indifference pricing in the presence of transaction costs, studying the singular control problem. This line of research has been deepened by Damgaard \cite{Dam06} and Zakamouline \cite{Zak05} who investigate the problem numerically for hyperbolic (resp. constant) absolute risk aversion, the latter adding the study of the seller's price. While all these papers assume asset prices given by geometric Brownian motion, Cosso, Marazzina and Sgarra \cite{CMS15} extend the buyer's side results to stochastic volatility. Oberman and Zariphopoulou, \cite {OZ03} use the indifference pricing methodology to price options on nontraded assets with dynamics correlated to traded assets from a buyer's perspective, using exponential utility in a geometric Brownian motion setting. An application to a regime switching model under expected utility indifference from the buyer's side of view is given by Gyulov and Koleva in \cite{GK22}.

Wu and Dai \cite{WD09} consider the indifference price of an American claim from a seller's point of view in a jump diffusion model under exponential utility. Bayraktar and Zhou \cite{BZ14} consider indifference pricing of American options on defaultable claims under exponential utility, for both buyer and seller. And K\"{u}hn \cite{Kue02} considers the problem of an option seller with a finite number of choices (such as Bermudan options) for general utility functions.

Two papers extend the problem to time dependent utilities. Leung, Sircar and Zariphopoulou \cite{LSZ12} consider forward performance measures and consider the buyer's indifference price in a stochastic volatility market, contrasting it to previous results for exponential utility in \cite{LS09}. Yan, Liang and Yang extend the indifference pricing setup in \cite{YLY15} to time dependent, additive stochastic differential utilities and optimal investment and consumption for an investor facing uncertainty about the risk-neutral probability measure. They discuss both seller's and buyer's perspectives.

We want to compare in particular the different notions of indifference price used. All papers, even those who consider both buyer's and seller's price, work with a single filtration setup. For the buyer's price, the papers \cite{DZ95, Dam06, Zak05, OZ03, LS09, LSZ12, GK22} use some form of (backward) stepwise maximization of strategies, after the exercise and before it, which is inspired from the Bellman principle in dynamic programming, solving the Merton problem from the exercise time onwards.

The definition of the buyer's and seller's price in \cite{BZ14} compares the expected utility of the hedged payoff for a given price at the time of the exercise with the utility of doing no investment at all. This notion strangely mixes notions of certainty equivalent and indifference price. But even if we adapt this notion in a way to compare buyer's risk at the time of exercise (determined to be risk-minimizing) and put necessary conditions that a minimal minimizing stopping time exists, this notion is in general not free of arbitrage. For the seller's price this approach is not even possible, as the potential exercise time is not known to the seller (only to the buyer).

For the seller's price the precise conditions on the admissibility of strategies are rarely fleshed out and most papers are cavalier about it. Zakamouline \cite{Zak05} uses an analogous version to the buyer's formulation, but assumes that the seller knows the optimal strategy of the buyer. K\"{u}hn \cite{Kue02} alone gives a careful discussion and a precise definition, albeit only for the discrete case with finitely many payoff options. Our definition is essentially a generalization of this framework to the general case. Note that a similar formulation of nonanticipativity was given in \cite{BHZ15} in the context of superhedging under model uncertainty. However, contrary to $\mathcal H^\prime_{\text{sell}}$ in \eqref{eq:Hprime} they consider not only the realization of the stopping times, but the stopping times (as random variables) themselves. As the option seller has only information on the actual exercise of the option, not hypothetical different asset price and exercise scenarios, we insist that the formulation should depend only on the realization of the stopping time, a point that K\"{u}hn rightfully highlights in \cite[Remark 2.3]{Kue02}. (A further slight difference is that we use $s\leq t_1\wedge t_2$ instead of $s< t_1\wedge t_2$ which assures $h^\tau$ to be a predictable process with respect to $\mathbb{F}^{\tau, \text{sell}}$.)

Finally, the forward performance formulation in \cite{LSZ12} and the stochastic differential utility in \cite{YLY15} are due to their recursive nature independent of the time horizon, avoiding in this way the intricacies of the dependence on the horizon of the indifference.

\section{Stochastic Volatility Models \& BDSE-R-BSDEs}\label{sec:stochvol}

We now turn our attention to a class of specific models to provide explicit representations of risk-indifference prices, following mainly the setup of \cite{SS15}. Specifically, we assume that the risk-free asset has a constant interest rate and thus $dB_t = r B_t \, dt$, $B_0 =1$. The price of the discounted risky asset is given by 
\begin{align*}
	dS_t & = \bigl(\mu(V_t) -r \bigr)S_t\, dt + \sigma(V_t) S_t\, dW_{1,t}, \quad S_0 = s,\\
	dV_t &= m(V_t) \, dt +a(V_t) \, dW_{2,t}, \quad V_0 = v,
\end{align*}
with correlated Brownian motions $W_1$, $W_2$ with constant correlation $\rho$. These models are very popular among practitioners and are usually called \textit{stochastic volatility models}. 

\begin{assumption}\label{model_assmpn}
We assume about the model parameters that
\begin{enumerate}
\item $\sigma, a\in C^{1+\beta}_{loc}(\R)$ for some $\beta>0$; 
\item there exist constants $\underline{\sigma}, \,\overline{\sigma}, \,\underline{a}, \,\overline{a}$ such that
\[0<\underline{\sigma}\leq \sigma \leq \overline{\sigma}<\infty, \qquad 0<\underline{a}\leq a \leq \overline{a}<\infty;\]
\item $\mu, m \in C^{0+\beta}_{loc}(\R)$ and $|\mu|<\overline{\mu}<\infty$.\\
\end{enumerate}
\end{assumption}
Additionally, we assume that
\begin{assumption}\label{terminal_cond}
The contingent claim $\xi_t:= f(S_t)$, where $f$ is a bounded continuous function. 
\end{assumption}
and 
\begin{assumption}\label{square_cond}
The admissible strategies satisfy the following square integrability condition
\[\E \biggl[\int_0^T h_s^2\sigma^2(V_s) S_s^2 ds\biggr] <\infty,\]
but for readability keep the notation of $\mathcal{H}_\text{buy/sell}$, now signifying the set of square integrable admissible strategies in the respective filtrations.
\end{assumption}

Moreover, we assume that both seller and buyer have no further information besides the asset prices, hence $\mathbb{F} = \mathbb{F}^{\text{buy}} = \mathbb{F}^{\text{sell}} = \mathbb{F}^{S,B}$. So, the (discounted) American claim $\xi$ is given by an almost surely continuous, bounded and $\mathbb{F}$-adapted process. We note that in principle this could be extended to non-bounded claims along the lines of \cite{HLT24} using the results in \cite{Sun21}.

We consider risk measures specified via solutions of backward stochastic differential equations (BSDEs). It is well-known (e.g., \cite{BEK09}) that if $g: \Omega \times [0,T] \times \mathbb{R}^{2} \to \mathbb{R}$ is a function satisfying certain properties (e.g., convexity on $\mathbb{R}^{2}$), then $g$ (called a driver) gives rise to a fully dynamic, strongly time-consistent monetary convex risk measure as the first component of the solution $(R,Z_{1}, Z_{2})$ to the BSDE given by
\begin{equation*}
	R_{t} = -\zeta - \int_t^u g(s, Z_{1,s},Z_{2,s})\,dt - \int_t^u Z_{1,s}\,dW_{1,s} -\int_t^u Z_{2,s}dW_{2,t}.
\end{equation*}
I.e., the time $t$ risk of a $\mathcal{F}_u$ measurable claim $\zeta$ at time horizon $u$ is given by $\rho_{t,u}(\zeta):= R_{t}$. For our purpose, we will assume that $g$ is just a function of $Z_{1}, Z_{2}$. What we are mostly interested in is not the risk itself, but the residual risk when we use hedging in the market to (partially) mitigate risk. For this purpose, we have to be a bit more restrictive.

\begin{definition} \label{def:driver}
	A driver $g : \mathbb{R}^{2} \to \mathbb{R}$ is called strictly quadratic with derivatives of (at most) linear growth if it satisfies
	\begin{enumerate}
		\item $g \in C^{2,1}(\mathbb{R}^{2})$;
		\item $g_{z_{1}z_{1}}(z_{1},z_{2}) > 0$ for all $z_{1},z_{2} \in \mathbb{R}$;
		\item there exists constants $c_{1}, c_{2} > 0$ such that
			\begin{equation*}
			c_{1}\Bigl(\frac{z_{1}^{2}}{4c_{1}^{2}} - (1 + z_{2}^{2}) \Bigr) \leq g(z_{1},z_{2}) \leq c_{2}(1 + z_{1}^{2} + z_{2}^{2});
			\end{equation*}
		\item there exists a constant $c_3>0$ such that $\frac{1}{c_{3}} \bigl( |z_{1}| - c_{3}(1 + |z_{2}|) \bigr) \leq |g_{z_{1}}(z_{1},z_{2})| \leq c_{3}(1 + |z_{1}| + |z_{2}|)$; 
		\item there exists a constant $c_4>0$ such that $|g_{z_{2}}(z_{1},z_{2})| \leq c_{4}(1 + |z_{1}| + |z_{2}|)$.
	\end{enumerate}
\end{definition}

Note that this notion is (slightly) more restrictive then the concept used in \cite{SS15}, relying only on conditions 1-3. In the American case we need the additional conditions as we rely, in the proof of the following theorem, on a comparison theorem for RBSDES (specifically \cite[Proposition 3.2]{KLQT02}) that requires them (cf. also \cite{Mor13} for a possible slight generalization). Hedging the risk is related to solving a BSDE with driver $g^*(-\lambda, z_2)+\lambda z_1$, where the principal part stems from a partial Fenchel-Legendre transform in the component that represents the tradeable instruments.

\begin{definition} 
	The nonlinear part $g^* : \mathbb{R}^{2} \to \mathbb{R}$ of the risk-adjusted driver is defined as the partial Fenchel conjugate of $g(z_{1},z_{2})$ in $z_{1}$, i.e., $g^*(\zeta,z_{2}) := \sup \limits_{z_{1} \in \mathbb{R}} \{ \zeta z_{1} - g(z_{1},z_{2})\}$.
\end{definition}

\begin{proposition}
	If $g$ is a strictly quadratic driver with derivatives of linear growth, then $g^*$ is also a strictly quadratic driver with derivatives of linear growth.
\end{proposition}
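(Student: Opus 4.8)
The plan is to use the duality of the partial Legendre--Fenchel transform and to read off each of the five conditions of Definition~\ref{def:driver} for $g^*$ from the corresponding one for $g$. The first step is to show that for every $(\zeta,z_2)$ the supremum in $g^*(\zeta,z_2)=\sup_{z_1}\{\zeta z_1-g(z_1,z_2)\}$ is attained at a unique point $z_1^*(\zeta,z_2)$, and that $(\zeta,z_2)\mapsto z_1^*(\zeta,z_2)$ is $C^1$. Attainment holds because the lower bound in condition~3 gives $\zeta z_1-g(z_1,z_2)\le \zeta z_1-\tfrac{z_1^2}{4c_1}+c_1(1+z_2^2)$, so the inner function is continuous and tends to $-\infty$ as $|z_1|\to\infty$; uniqueness, together with the characterization of $z_1^*$ as the solution of $g_{z_1}(z_1^*,z_2)=\zeta$, follows from strict convexity of $g$ in $z_1$ (condition~2). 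Since $g\in C^{2,1}(\R^2)$ and $g_{z_1z_1}>0$, the implicit function theorem applied to $F(z_1,\zeta,z_2):=g_{z_1}(z_1,z_2)-\zeta$ shows $z_1^*\in C^1$, with $\partial_\zeta z_1^*=1/g_{z_1z_1}(z_1^*,z_2)$ and $\partial_{z_2}z_1^*=-g_{z_1z_2}(z_1^*,z_2)/g_{z_1z_1}(z_1^*,z_2)$.

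Writing $g^*(\zeta,z_2)=\zeta z_1^*-g(z_1^*,z_2)$ and differentiating, the envelope theorem (the terms carrying $\partial z_1^*$ cancel because $z_1^*$ is an interior critical point of the inner expression) gives $g^*_\zeta=z_1^*$ and $g^*_{z_2}=-g_{z_2}(z_1^*,z_2)$, hence also $g^*_{\zeta\zeta}=\partial_\zeta z_1^*=1/g_{z_1z_1}(z_1^*,z_2)$ and $g^*_{\zeta z_2}=\partial_{z_2}z_1^*$. Each of these is a continuous composition of the (continuous) derivatives of $g$ with the $C^1$ map $z_1^*$, so $g^*\in C^{2,1}(\R^2)$, establishing condition~1; and $g^*_{\zeta\zeta}>0$ since $g_{z_1z_1}>0$, establishing condition~2.

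The remaining three conditions follow from the corresponding bounds for $g$ by direct substitution. For condition~3, inserting the lower bound on $g$ into the supremum yields $g^*(\zeta,z_2)\le c_1(1+\zeta^2+z_2^2)$, while inserting the upper bound on $g$ yields $g^*(\zeta,z_2)\ge \tfrac{\zeta^2}{4c_2}-c_2(1+z_2^2)$; these are exactly the bounds required of $g^*$, with $c_1'=c_2$ and $c_2'=c_1$. For condition~4, since $g^*_\zeta=z_1^*$ and $|g_{z_1}(z_1^*,z_2)|=|\zeta|$, evaluating the two inequalities of condition~4 for $g$ at $z_1=z_1^*$ gives $|z_1^*|\le c_3(1+|\zeta|+|z_2|)$ and $|z_1^*|\ge \tfrac{1}{c_3}\bigl(|\zeta|-c_3(1+|z_2|)\bigr)$, i.e.\ condition~4 for $g^*$ with $c_3'=c_3$. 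Finally, for condition~5, $|g^*_{z_2}|=|g_{z_2}(z_1^*,z_2)|\le c_4(1+|z_1^*|+|z_2|)$, and inserting the bound $|z_1^*|\le c_3(1+|\zeta|+|z_2|)$ obtained in the previous step gives $|g^*_{z_2}|\le c_4(1+c_3)(1+|\zeta|+|z_2|)$, i.e.\ condition~5 with $c_4'=c_4(1+c_3)$.

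The main obstacle lies entirely in the first paragraph: one must verify that the conjugation is \emph{nondegenerate}, that is, that the optimizer $z_1^*$ exists, is unique and interior, and depends $C^1$-smoothly on $(\zeta,z_2)$, since the envelope-theorem computation of the derivatives of $g^*$ and all of the substitution arguments rely on having such a regular optimizer. Once this is in place, conditions 1--5 for $g^*$ are essentially formal translations of conditions 1--5 for $g$, with the roles of the quadratic-growth constants $c_1$ and $c_2$ interchanged.
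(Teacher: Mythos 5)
Your proposal is correct and follows essentially the same route as the paper: conditions 4 and 5 are obtained by identifying $g^*_{z_1}$ with the partial inverse of $g_{z_1}$ (your evaluation of the bounds of condition 4 at the optimizer $z_1^*$ with $g_{z_1}(z_1^*,z_2)=\zeta$ is exactly the paper's change of variables) and by the identity $g^*_{z_2}=-g_{z_2}(g_{z_1}^{-1}(\cdot),\cdot)$. The only difference is that the paper dispatches conditions 1--3 by citing \cite[Lemma 2.5]{SS15}, whereas you prove them directly via coercivity, the implicit function theorem and the envelope argument, which is a sound and slightly more self-contained treatment of the same facts.
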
 

\begin{proof}
	From Lemma 2.5 of \cite{SS15} we have that  the non-linear component $g^*$ of the risk-adjusted driver is strictly quadratic, if the driver $g$ is strictly quadratic. It remains to 			be shown that the non-linear component $g^*$ of the risk-adjusted driver satisfies conditions 4 and 5 of Definition \ref{def:driver} if $g$ does.
	
	Suppose condition 4 holds for $g(z_{1},z_{2})$.  Let $g_{z_{1}}^{-1}(z_{1},z_{2})$ denote the partial inverse of $g_{z_{1}}(z_{1},z_{2})$ in $z_{1}$.  Through the 
	simple variable change $z_{1} = g_{z_{1}}^{-1}(y,z_{2})$, the inequality $\frac{1}{c_{3}} \bigl( |z_{1}| - c_{3}(1 + |z_{2}|) \bigr) \leq |g_{z_{1}}|$ implies 
	$|g_{z_{1}}^{-1}| \leq c_{3}(1 + |z_{1}| + |z_{2}|)$, while the inequality $|g_{z_{1}}| \leq c_{3}(1 + |z_{1}| + |z_{2}|)$ implies 
	$\frac{1}{c_{3}}(|z_{1}| - c_{3}(1 + |z_{2}|)) \leq |g_{z_{1}}^{-1}|$, giving 
	\begin{equation} \label{eq:inverseinequality}
		\frac{1}{c_{3}}(|z_{1}| - c_{3}(1 + |z_{2}|)) \leq |g_{z_{1}}^{-1}| \leq c_{3}(1 + |z_{1}| + |z_{2}|).
	\end{equation}
	The desired inequality follows by noting $g_{z_{1}}^{-1} = g^*_{z_{1}}$.
	
	Next we suppose conditions 4 and 5 both hold for $g(z_{1},z_{2})$.  Employing the same variable change as in the argument involving condition 4 above, it follows from 
	condition 5 that $|g_{z_{2}}(g_{z_{1}}^{-1}(z_{1},z_{2}),z_{2})| \leq c_{4}(1 + |g_{z_{1}}^{-1}(z_{1},z_{2})| + |z_{2}|)$.  The result then follows from observing 
	that $g^*_{z_{2}}(z_{1},z_{2}) = -g_{z_{2}}(g_{z_{1}}^{-1}(z_{1},z_{2}),z_{2})$, and noting the upper bound obtained on $g^*_{z_{1}} = g_{z_{1}}^{-1}$ in 
	\eqref{eq:inverseinequality}.
\end{proof}

Before proving the main theorem, we want to recall a property of risk measures defined by Brownian BSDEs.

\begin{lemma}\label{lem:stc}
	In the BSDE setting, the strong time-consistency property holds for intermediate stopping times. Specifically, for all
	$\xi \in L^\infty(\Omega,\mathcal{F}_u, \mathbb{P})$, $s \leq u$, and $\tau \in [s,u]$ an $\mathbb{F}$-stopping time,
	\[
		\rho_{s,\tau} \bigl( -\rho_{\tau,u} (\xi)\bigr) = \rho_{s,u} (\xi).
	\]
\end{lemma}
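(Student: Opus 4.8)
The plan is to recognise both sides as the initial value of one and the same backward stochastic differential equation, using the flow (concatenation) property of BSDEs. Recall that $\rho_{t,u}(\xi)=R_t$, where $(R,Z_1,Z_2)$ solves, on $[0,u]$,
\[
	R_t=-\xi-\int_t^u g(Z_{1,r},Z_{2,r})\,dr-\int_t^u Z_{1,r}\,dW_{1,r}-\int_t^u Z_{2,r}\,dW_{2,r},
\]
a problem that is well posed in the class of bounded $R$ with $Z_1,Z_2$ of BMO type, since $g$ is of (at most) quadratic growth and $\xi$ is bounded. For a stopping time $\tau\in[s,u]$ we accordingly read $\rho_{\tau,u}(\xi):=R_\tau$, which is $\mathcal{F}_\tau$-measurable and bounded by the a priori estimates, and which is consistent with the $\mathcal{F}_t$-regularity property F when $\tau$ takes finitely many values. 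Likewise, for $\eta\in L^\infty(\Omega,\mathcal{F}_\tau,\mathbb{P})$ we read $\rho_{s,\tau}(\eta):=\bar R_s$, where $(\bar R,\bar Z_1,\bar Z_2)$ is the solution of the BSDE run backward from the random horizon $\tau$,
\[
	\bar R_t=-\eta-\int_{t\wedge\tau}^{\tau} g(\bar Z_{1,r},\bar Z_{2,r})\,dr-\int_{t\wedge\tau}^{\tau}\bar Z_{1,r}\,dW_{1,r}-\int_{t\wedge\tau}^{\tau}\bar Z_{2,r}\,dW_{2,r},\qquad s\le t\le u.
\]

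The computation then proceeds as follows. Specialising to $\eta:=-\rho_{\tau,u}(\xi)=-R_\tau$, the terminal value at time $\tau$ of the BSDE defining $\rho_{s,\tau}(\eta)$ equals $-\eta=R_\tau$. The key point is that the original triple $(R,Z_1,Z_2)$, restricted to $\{(r,\omega):s\le r\le\tau(\omega)\}$, already solves exactly that equation: from the defining equation for $R$ one obtains, for every $t\le\tau$,
\[
	R_t=R_\tau-\int_t^\tau g(Z_{1,r},Z_{2,r})\,dr-\int_t^\tau Z_{1,r}\,dW_{1,r}-\int_t^\tau Z_{2,r}\,dW_{2,r},
\]
where stopping the stochastic integrals at $\tau\le u$ is legitimate because $Z_1,Z_2$ are of BMO type, so they are martingales to which optional sampling applies; at $t=\tau$ this reduces to $R_\tau=R_\tau$, confirming the terminal condition. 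Hence $(R,Z_1,Z_2)$ restricted to $[s,\tau]$ is a solution of the BSDE that defines $\rho_{s,\tau}\bigl(-R_\tau\bigr)$. Invoking uniqueness for this class of quadratic BSDEs with bounded terminal data, it must coincide with $(\bar R,\bar Z_1,\bar Z_2)$; evaluating at $t=s$ gives
\[
	\rho_{s,\tau}\bigl(-\rho_{\tau,u}(\xi)\bigr)=\bar R_s=R_s=\rho_{s,u}(\xi),
\]
which is the claim.

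I expect the main obstacle to be interpretational and analytic rather than algebraic: one must fix what $\rho$ means when a time index is a stopping time -- settled above through the BSDE itself, compatibly with property F in the finitely-valued case -- and one must appeal to the existence and (crucially) uniqueness theory for the random-horizon quadratic BSDE, so that the solution exhibited above is \emph{the} solution; the flow identity is then immediate. If one prefers to avoid random terminal times, an alternative route is to establish the identity first for $\tau$ taking finitely many values $t_1<\dots<t_n$ by iterating the strong time-consistency property together with $\mathcal{F}_t$-regularity, and then to pass to a general $\tau$ via an approximation $\tau_n\downarrow\tau$ by finitely-valued stopping times, using the path-continuity of $r\mapsto R_r$ and the stability of BSDE solutions; that limiting step would be the main effort in this alternative.
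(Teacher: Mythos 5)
Your argument is correct, but it is not what the paper does: the paper's entire proof is a citation of \cite[Theorem 3.21]{BEK09}, whereas you reconstruct the underlying flow (concatenation) property of the BSDE directly. The two are close in substance --- the cited theorem is itself proved by exactly the mechanism you use --- but your version makes the two ingredients explicit that the citation hides: (i) a definition of $\rho_{\tau,u}$ and $\rho_{s,\tau}$ when an index is a stopping time, which you settle by evaluating the solution at $\tau$ and by posing the random-horizon BSDE on $[s,\tau]$, consistently with property F for finitely-valued $\tau$; and (ii) well-posedness (in particular uniqueness) for quadratic BSDEs with bounded terminal data, which in this paper's setting is available because the drivers of Definition \ref{def:driver} are convex and of quadratic growth (Kobylanski-type theory, the same framework that underlies the RBSDE results of \cite{KLQT02} used later). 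Given those two points, the identification of the restricted triple $(R,Z_1,Z_2)\vert_{[s,\tau]}$ with the solution defining $\rho_{s,\tau}(-R_\tau)$, and hence $\bar R_s=R_s$, is exactly right; your fallback route via finitely-valued approximations $\tau_n\downarrow\tau$ is also viable but unnecessary once uniqueness for the random-horizon equation is in hand. In short: the paper buys brevity by outsourcing to \cite{BEK09}; your proof buys self-containedness at the cost of having to state the stopping-time interpretation and the uniqueness theory explicitly, and as written it does so adequately.
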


\begin{proof}
	This follows from \cite[Theorem 3.21]{BEK09}.
\end{proof}

\begin{theorem}\label{thm:sellprice}
	For any $t\in[0,T]$, the seller's indifference price \eqref{S} can be represented as 
	\[
		P_t^{\text{sell}} = \check{R}^{\xi}_t - \check{R}^0_t,
	\]
	where $(\check{R}^\zeta, \check{Z}_1, \check{Z}_2, \check{K}, Y, \bar{Z}_1, \bar{Z}_2)$ is the unique solution to the BSDE-reflected BSDE (BSDE-R-BSDE) system
	\begin{align*}
		\left\{\begin{array}{ll}\check{R}^\zeta_u &= \zeta_T -\int_u^T \bigl(g^*(-\lambda_s, \check{Z}_{2,s}) + \lambda_s\check{Z}_{1,s}  \bigr) \, ds+
		\check{K}_T-\check{K}_u  -\int_u^T\check{Z}_{1,s} \, dW_{1,s} -\int_u^T\check{Z}_{2,s} \, dW_{1,s},\\
		Y_u &= 0 -\int_u^T \bigl( g^*(-\lambda_s, \bar{Z}_{2,s}) + \lambda_s\bar{Z}_{1,s}\bigr) \, ds  -\int_u^T\bar{Z}_{1,s} \, dW_{1,s} -\int_u^T\bar{Z}_{2,s} \, 
		dW_{2,s}\\
		\check{R}^\zeta_u &\geq  \zeta_u + Y_u \quad \text{with} \quad \int_t^T \bigl(\check{R}^\zeta_s - (\zeta_s+Y_s)\bigr) \, d\check{K}_s = 0, \qquad \text{ for }t\leq u\leq T,
		\end{array} \right.
	\end{align*}
	with $\zeta =\xi$ resp. $\zeta = 0$.
\end{theorem}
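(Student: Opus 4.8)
The plan is to establish the identity in four steps: identify the ``no-claim'' term $\check R^0$; reduce the seller's $\essinf$--$\esssup$ problem \eqref{S} to an optimal-stopping problem with reflecting barrier $\xi_\cdot+Y_\cdot$ under the hedged risk measure; match the latter with the reflected BSDE in the statement; and verify well-posedness of the coupled system. Throughout recall that here $\mathbb F^\text{buy}=\mathbb F^\text{sell}=\mathbb F$, so $\mathcal T_{t,T}^\text{sell}=\mathcal T_{t,T}$ and $\hat\rho=\check\rho$. \emph{Step 1.} Taking $\zeta=0$ the barrier becomes $Y_\cdot$ itself, so by uniqueness the minimal solution of the reflected equation is $\check R^0\equiv Y$ with $\check K\equiv 0$, and $\check R^0_t=Y_t$. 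On the other hand the $Y$-equation is the BSDE with risk-adjusted driver $g^*(-\lambda,z_2)+\lambda z_1$, which by \cite{SS15} (via the partial Fenchel--Legendre computation recalled before Definition \ref{def:driver}) is exactly the driver of the residual risk measure $\check\rho$; hence $\check\rho_{t,T}(0)=Y_t=\check R^0_t$, and it remains to identify the first term of \eqref{S} with $\check R^\xi_t$.

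\emph{Step 2 (reduction).} Fix $H\in\mathcal H^\prime_\text{sell}$ and $\tau\in\mathcal T_{t,T}$, and write $h:=H(T,\cdot,\cdot)\in\mathcal H_\text{sell}$; by \eqref{eq:htau}, $h^\tau_s=h_s$ for $s\le\tau$, so $\int_t^\tau h^\tau_s\,dS_s$ is $\mathcal F_\tau$-measurable. Using strong time-consistency at the stopping time $\tau$ (Lemma \ref{lem:stc}) and then cash-invariance to extract the $\mathcal F_\tau$-measurable part,
\[
\rho_{t,T}\Bigl(\int_t^T h^\tau_s\,dS_s-\xi_\tau\Bigr)=\rho_{t,\tau}\Bigl(-\rho_{\tau,T}\bigl(\textstyle\int_\tau^T h^\tau_s\,dS_s\bigr)+\int_t^\tau h_s\,dS_s-\xi_\tau\Bigr).
\]
Since $\int_\tau^T h^\tau_s\,dS_s\in\mathcal C^\text{sell}_{\tau,T}$ we have $\rho_{\tau,T}\bigl(\int_\tau^T h^\tau_s\,dS_s\bigr)\ge\check\rho_{\tau,T}(0)=Y_\tau$, so monotonicity and cash-invariance give
\[
\rho_{t,T}\Bigl(\int_t^T h^\tau_s\,dS_s-\xi_\tau\Bigr)\ \ge\ \rho_{t,\tau}\Bigl(\int_t^\tau h_s\,dS_s-(\xi_\tau+Y_\tau)\Bigr).
\]
Because $H\mapsto H(T,\cdot,\cdot)$ maps onto $\mathcal H_\text{sell}$, taking $\esssup_\tau$ and then $\essinf$ over $\mathcal H^\prime_\text{sell}$ shows the first term of \eqref{S} is $\ge\essinf_{h\in\mathcal H_\text{sell}}\esssup_{\tau}\rho_{t,\tau}(\int_t^\tau h_s\,dS_s-(\xi_\tau+Y_\tau))$, which in turn is no smaller than the corresponding $\esssup_{\tau}\essinf_{h}$.

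\emph{Step 3 (RBSDE, and the upper bound).} For each horizon $\sigma$ the map $\zeta\mapsto\essinf_h\rho_{t,\sigma}(\zeta+\int_t^\sigma h\,dS)$ is the strongly time-consistent risk measure with driver $g^*(-\lambda,z_2)+\lambda z_1$ (\cite{SS15}), extended to stopping-time horizons via Lemma \ref{lem:stc}. Since $\xi_\cdot+Y_\cdot$ is an a.s.\ continuous, bounded barrier and this driver is strictly quadratic, the correspondence between optimal stopping under a nonlinear expectation and reflected BSDEs with superlinear/quadratic coefficient (\cite{KLQT02}) identifies $\esssup_\tau\essinf_h\rho_{t,\tau}(\int_t^\tau h\,dS-(\xi_\tau+Y_\tau))$ with $\check R^\xi_t$; with Step 2 this proves ``$\ge$''. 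For ``$\le$'' one produces an explicit strategy $H^*\in\mathcal H^\prime_\text{sell}$: after the (running) exercise time use the market-optimal hedge $\bar h$ of the zero contract -- which is $\mathbb F$-predictable and, crucially, independent of the realization of the exercise time, so $H^*\in\mathcal H^\prime_\text{sell}$ -- and before exercise use the hedge $\check h$ read off from the RBSDE component $\check Z$ via the Fenchel argmap $\zeta\mapsto(g_{z_1})^{-1}(\zeta,\cdot)$. A verification argument for BSDE-hedging started at $\tau$ gives $\rho_{\tau,T}(\int_\tau^T\bar h_s\,dS_s)=Y_\tau$, so splitting at $\tau$ exactly as in Step 2,
\[
\rho_{t,T}\Bigl(\int_t^T (h^*)^\tau_s\,dS_s-\xi_\tau\Bigr)=\rho_{t,\tau}\Bigl(\int_t^\tau\check h_s\,dS_s-(\xi_\tau+Y_\tau)\Bigr).
\]
Since $\check R^\xi_\tau\ge\xi_\tau+Y_\tau$, monotonicity bounds the right-hand side by $\rho_{t,\tau}(\int_t^\tau\check h_s\,dS_s-\check R^\xi_\tau)$, and this is $\le\check R^\xi_t$ because the Skorokhod condition $\int(\check R^\xi_s-(\xi_s+Y_s))\,d\check K_s=0$ makes $\check R^\xi$ a supersolution of the hedged BSDE on $[t,\tau]$; here the comparison theorem for reflected BSDEs \cite[Proposition 3.2]{KLQT02}, whose hypotheses are precisely conditions 4 and 5 of Definition \ref{def:driver}, is used to compare the BSDE driven by $\check h$ with the RBSDE. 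Taking $\esssup_\tau$ and $\essinf_H$ gives ``$\le$'', hence equality; combined with Step 1, $P^\text{sell}_t=\check R^\xi_t-\check R^0_t$.

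\emph{Step 4, and the main obstacle.} Well-posedness: the $Y$-BSDE has a quadratic driver and bounded (zero) terminal data, hence a unique bounded solution; then $\xi_\cdot+Y_\cdot$ is a bounded, a.s.\ continuous barrier and the quadratic BSDE reflected at it is again uniquely solvable, so the whole BSDE-R-BSDE system has a unique solution. The hard part is Step 3: one must run the Fenchel/hedging duality over the \emph{random} interval $[t,\tau]$ while keeping every strategy inside the admissible classes $\mathcal H_\text{sell}$ and $\mathcal H^\prime_\text{sell}$ -- the non-anticipativity structure set up in Section \ref{sec:sell} is exactly what makes the construction of $H^*$ and the reduction delicate -- and one must pin down the value of the optimal-stopping problem under a \emph{quadratic} nonlinear expectation, in particular controlling a priori estimates uniformly, which is why the stronger driver conditions of Definition \ref{def:driver} and the quadratic RBSDE results of \cite{KLQT02}, rather than their Lipschitz counterparts, are needed.
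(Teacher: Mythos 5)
Your argument is correct in outline, but it is a genuinely different route from the paper's. The paper never interchanges the infimum and the supremum and never exhibits an optimal strategy: for each fixed $H\in\mathcal H^\prime_{\text{sell}}$ it represents $\esssup_\tau\rho_{t,\tau}(-U^{t,H}_\tau)$ as an RBSDE with driver $g$ and an $H$-dependent barrier (\cite[Prop.\ 3.1]{KLQT02}), then shifts by the pre-exercise wealth $\int_t^\cdot H(T,s)\,dS_s$ so that the barrier depends only on the post-exercise components $H(u,\cdot)$ while the dynamics depend only on $H(T,\cdot)$; the infimum over $\mathcal H^\prime$ is then split accordingly, pushed through the barrier via attainment (\cite[Thm.\ 7.17]{KS05}) and the RBSDE comparison theorem (\cite[Prop.\ 3.2]{KLQT02}) — producing the boundary $\zeta+Y$ — and pushed through the driver via the KS05/SS15 arguments, turning $g$ into the hedged driver. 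You instead run a sandwich/verification argument: the lower bound replaces the post-exercise hedge by the zero-contract optimum ($\geq Y_\tau$) and uses the minimax inequality plus the RBSDE representation of optimal stopping under the hedged nonlinear expectation; the upper bound constructs the explicit concatenated strategy $H^*$ (zero-contract hedge after exercise, Fenchel-argmax hedge from $\check Z$ before) and closes with a supersolution comparison on $[t,\tau]$. What your route buys is an economically transparent candidate optimal strategy and the need for only one application of \cite{KLQT02}, with the hedged driver (this is where the paper's proposition that $g^*$ remains strictly quadratic with linear-growth derivatives enters, which you use implicitly); what it costs is extra verification that the paper simply delegates to \cite[Thm.\ 7.17]{KS05}: attainment and admissibility of $\bar h$ and of the Fenchel-argmax hedge on the \emph{random} interval $[t,\tau]$, the stopping-time-horizon version of the SS15 hedging representation (your appeal to Lemma \ref{lem:stc} is the right ingredient but the extension still has to be written out), and note that the comparison needed in your upper bound is a quadratic BSDE/supersolution comparison ($d\check K\geq 0$ suffices; the Skorokhod condition is not what is used), not literally \cite[Prop.\ 3.2]{KLQT02}, which the paper needs for passing $\essinf$ through the barrier — a step your argument avoids altogether.
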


\begin{proof}
	Fix $t\in [0, T]$. The goal is to derive an RBSDE expression for
	\[
 		\essinf_{H\in \mathcal H^\prime} \esssup_{\tau\in \mathcal{T}_{t,T}}\, \rho_{t,T}\biggl(\int_t^T h_s^\tau \, dS_s -\zeta_\tau\biggr)
	\]
	for an almost surely continuous, bounded and $\mathbb{F}$-adapted process $\zeta$ for which we can then substitute $\zeta = \xi$ or $\zeta = 0$ to get the result. This is done in stages by proving several reformulations of the problem.

	We start by considering the claim $\zeta$ for a fixed hedging strategy selection function $H\in \mathcal H^\prime$ (suppressing the $\omega$-dependence of $H$ in the notation). Using 
	the strong time-consistency of Lemma \ref{lem:stc} and cash-invariance properties of risk measures, we can express the hedged risk of the American payoff $\zeta$ at 
	stopping time $\tau\in \mathcal T_{t,T}$ as
	\[
		\rho_{t,T}\Bigl(\int_t^Th_s^\tau \,dS_s -\zeta_\tau\Bigr)= \rho_{t,\tau}\biggl(\int_t^\tau H(T, s) \, dS_s-\zeta_\tau -\rho_{\tau, T}
		\Bigl(\int_\tau^T H(\tau, s) \, dS_s\Bigr )\biggr)\\
		= \rho_{t,\tau}\bigl(-U_{\tau}^{t,H}\bigr),
	\]
	where
	\[
		U_u^{t,H} := \zeta_u - \int_t^u H(T, s)\, dS_s +\rho_{u, T}\Bigl(\int_u^T H(u, s) \, dS_s\Bigr )
	\]
	for $u \in [t,T]$. Denote the supremum over all stopping times by 
	\[
		R^{t,H}_t:= \esssup_{\tau\in \mathcal T_{t,T}} \rho_{t,\tau}\bigl(-U_{\tau}^{t,H}\bigr).
	\]
	By \cite[Proposition 3.1]{KLQT02} we can represent $R^{t,H}_u = \esssup_{\tau\in \mathcal T_{u,T}} \rho_{t,\tau}\bigl(-U_{\tau}^{t,H}\bigr)$, for $t\leq u\leq T$, as 
	the first component of the (unique) solution of the RBSDE
	\[
		\left\{\begin{array}{ll}
		& R^{t,H}_u =  U_T^{t,H} +\int_u^T g\bigl(Z_{1,s}^{t,H}, Z_{2,s}^{t,H}\bigr)\, ds +K_T^{t,H} -K_u^{t,H}-\int_u^TZ^{t,H}_{1,s} \, dW_{1,s} - \int_u^T 
		Z_{2,s}^{t,H} \, dW_{2,s}, \\
		&R^{t,H}_u \geq U_u^{t,H}, \quad \int_t^T\bigl(R^{t,H}_r -U_r^{t,H}\bigr) \, dK_r^{t,H}=0, \quad t\leq u\leq T.
		\end{array} \right.
	\]
	Next, we have to consider $ \essinf_{H\in \mathcal H^\prime} R^{t,H}_t$. To do so, we first develop an alternative representation for the maximal risk. Define
	\[
		\ddot{R}^H_{u} := R^{t,H}_u + \int_t^u H(T, s) \, dS_s,
	\] 
	and note that it is the first component of the unique solution $\bigl(\ddot{R}^H_{u}, \ddot{Z}_{s}^{H,1}, \ddot{Z}_{s}^{H,2}, \ddot{K}_{u}^H\bigr)$ to the RBSDE
	\[
		\left\{\begin{array}{ll}
		\ddot{R}^H_{u} &=  \zeta_T - \int_u^T H(T, s) \, dS_s +\int_u^T g\bigl(\ddot{Z}_{1,s}^{H}, \ddot{Z}_{2,s}^{H}\bigr) \, ds +\ddot{K}_{T}^H -
		\ddot{K}_{u}^H-\int_u^T \ddot{Z}^{H}_{1,s} \, dW_{1,s} - \int_u^T \ddot{Z}_{2,s}^{H} \, dW_{2,s}, \\
		\ddot{R}^H_{u} &\geq \ddot{U}_{u}^H, \quad 
		\int_t^T\bigl(\ddot{R}^H_{r} -\ddot{U}_{r}^H\bigr) \, d\ddot{K}_{r}^H=0, \qquad \text{ for }t\leq u\leq T,
		\end{array} \right.
	\]
	where
	\[
		\ddot{U}_{u}^H := U_u^{t,H} + \int_t^u H(T,s) \, dS_s =  \zeta_u +\rho_{u, T}\Bigl(\int_u^T H(u, s) \, dS_s\Bigr)
	\]
	for $u \in [0,T]$. Observe that at time $u=t$,  $\ddot{R}^H_{t} = R^{t,H}_t$, and so we proceed to find a BSDE expression for
	$\essinf_{H\in \mathcal H^\prime} \ddot{R}^H_{t}$. 

	The reason for this alternative representation for the maximal risk, $\ddot{R}^H$, instead of $R^{t,H}$, now becomes clear:  for each $H\in \mathcal H^\prime$, the 
	BSDE dynamics for $\ddot{R}^H$ depends only on the strategy $H(T, \cdot)$ and its reflection barrier $\ddot{U}_u^H$ depends only on $H(u, \cdot)$, $u\in [t, T)$. We 
	can exploit this to separate the infima as follows:
 	\begin{equation}\label{separate-inf}
		\essinf_{H\in \mathcal H^\prime}\ddot R^H_u= \essinf_{H(T, \cdot)\in \mathcal H}\essinf_{\nu \in \mathcal H^{H(T,\cdot)}} \ddot{R}^\nu_{u},
	\end{equation}
	(while maintaining the right-continuity property of  $H\in \mathcal H^\prime$ in the first variable), where
	$\mathcal H^{H(T,\cdot)}:=\{\nu\in \mathcal H^\prime: \nu(T, \cdot)=H(T, \cdot)\}$.

	We first find an RBSDE representation for $\tilde{R}_u^H:=\essinf_{\nu \in \mathcal H^{H(T,\cdot)}}  \ddot{R}^\nu_{u}$. To this end, let us define for $u\in [t,T)$,
	\begin{equation}\label{inf_U}
		\bar{\zeta}_u := \essinf_{\nu \in \mathcal H^{H(T,\cdot)}} \ddot{U}_{u}^\nu = \essinf_{\nu(u, \cdot) \in \mathcal H} \ddot{U}_{u}^\nu  = \zeta_u +
		\check{\rho}_{u,T}(0)
	\end{equation}
	and let $Y_s :=\check{\rho}_{s,T}(0)$. Then from Proposition \ref{prop:optstrat} (cf. also \cite[Proposition 2.7]{SS15}), we have that $(Y, Z_1, Z_2)$ is the unique solution to the BSDE with terminal condition
	$\zeta_u$ and driver 
	\[
 		\inf_{\nu\in \mathbb{R}} \Bigl( - \nu\bigl(\mu(V_t)-r\bigr) + g\bigl(z_1-\nu\sigma(V_t), z_2\bigr)\Bigr)= -g^*\bigl(-\lambda_t, z_2\bigr) -z_1\lambda_t,
	\]
	with Sharpe ratio $\lambda_t = \frac{\mu(V_t)-r}{\sigma(V_t)}$. Thus,
	\[
		Y_u = 0 -\int_u^T \Bigl( g^*\bigl(-\lambda_s, \bar{Z}_{2,s}\bigr) +  \lambda_s\bar{Z}_{1,s}\Bigr) \, ds  -\int_u^T\bar{Z}_{1,s} \,dW_{1,s} -
		\int_u^T\bar{Z}_{2,s} \,dW_{2,s}.
	\]
	Following the arguments of Proposition \ref{prop:optstrat} (see \cite[Theorem 7.17]{KS05} for similar arguments in a superhedging setting) we see that the infimum in \eqref{inf_U}  is attained and the minimizing strategy is independent of $u$. By the
	comparison principle for quadratic RBSDEs (\cite[Proposition 3.2]{KLQT02}), we get that
	$\tilde{R}^H_{u} = \essinf_{\nu \in \mathcal H^{H(T,\cdot)}}  \ddot{R}^\nu_{u}$ satisfies
	\[
		\left\{\begin{array}{ll}
		\tilde{R}^H_{u} &=  \zeta_T - \int_u^T H(T, s) \, dS_s +\int_u^T g\bigl(\tilde{Z}_{1,s}^{H}, \tilde{Z}_{2,s}^{H}\bigr) \, ds +\tilde{K}_{t,T}^H -
		\tilde{K}_{u}^H-\int_u^T \tilde{Z}^{H}_{1,s} \, dW_{1,s} - \int_u^T \tilde{Z}_{2,s}^{H} \, dW_{2,s}, \\
		\tilde{R}^H_{u} &\geq \zeta_u + Y_u, \quad \int_t^T\bigl(\tilde{R}^H_{r} -(\zeta_r + Y_r)\bigr) \, d\tilde{K}_{r}^H=0.
		\end{array} \right.
	\]

	Finally, we take $\essinf_{H(T, \cdot) \in \mathcal H} \tilde{R}_{u}^H$, and, following the arguments of \cite[Theorem 7.17]{KS05}, get that
	\[
		\check{R}^\zeta_u : = \essinf_{H(T, \cdot) \in \mathcal H} \tilde{R}_{u}^H =\essinf_{H\in \mathcal H^\prime} \ddot{R}_t^H= \essinf_{H\in \mathcal H^\prime}
		R_t^{t,H}= \essinf_{H\in \mathcal H^\prime} \esssup_{\tau\in \mathcal{T}_{t,T}}\, \rho_{t,T}\biggl(\int_t^T h_s^\tau \, dS_s -\zeta_\tau\biggr)
	\] 
	has the representation given in the statement of the theorem. This concludes the proof.
\end{proof}

\begin{remark}\label{rem:bdry}
	We want to stress that the term $\zeta + Y$, $\zeta_u + Y_u = \zeta_u + \check{\rho}_{u,T}(0)$, appearing as reflection boundary has a clear economic interpretation: 
	One has to adapt the naive exercise boundary $\zeta$ by adding the (hedged) risk of the zero contract. Equivalently, as
	$\zeta_u + \check{\rho}_{u,T}(0) = \check{\rho}_{u,T}\bigl(-\zeta_u\bigr)$, one has to take the risk of the payment at the time of the exercise into account, however 
	allowing risk mitigation through trading until maturity.
\end{remark}

Analogously, but much easier, we can derive a RBSDE representation for the buyer's indifference price.

\begin{theorem}\label{thm:buyprice}
	The buyer's indifference price \eqref{B} can be represented as 
	\[
		P_t^{\text{buy}} = \hat{R}^\xi_t - \hat{R}^0_t
	\]
	where $(\hat{R}^\zeta, \hat{Z}_1, \hat{Z}_2, \hat{K}, Y, \bar{Z}_1, \bar{Z}_2)$ is the unique solution to the BSDE-R-BSDE
	\[
		\left\{\begin{array}{ll}
		\hat{R}^\zeta_u &=  \zeta_T +\int_u^T \bigl(g^*(-\lambda_s, -\hat{Z}_{2,s}) - \lambda_s\hat{Z}_{1,s}\bigr) \, ds +\hat{K}_T-\hat{K}_u -
		\int_u^T\hat{Z}_{1,s} \, dW_{1,s} -\int_u^T\hat{Z}_{2,s} \, dW_{2,s}, \\
		Y_u &=  0 +\int_u^T \bigl(g^*(-\lambda_s, -\bar{Z}_{2,s}) - \lambda_s\bar{Z}_{1,s}\bigr) \, ds  -\int_u^T\bar{Z}_{1,s} \, dW_{1,s} -
		\int_u^T\bar{Z}_{2,s} \, dW_{2,s}, \\
		\hat{R}^\zeta_u &\geq   \zeta_u + Y_u  \quad \text{with} \quad \int_t^T\bigl(\hat{R}^\zeta_r  -(\zeta_r + Y_r) \bigr) \, d\hat{K}_r^H=0.
		\end{array} \right.
	\]
	with $\zeta = \xi$ resp. $\zeta = 0$.
\end{theorem}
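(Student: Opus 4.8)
I would follow the blueprint of the proof of Theorem~\ref{thm:sellprice}, which collapses to something much shorter here because the buyer selects the exercise time herself: the realized $\tau$ carries no extra information, so none of the $\mathcal H^\prime$-type apparatus is needed and the admissible hedges are exactly the $\mathbb F$-predictable strategies already folded into $\hatrho$. Fix $t\in[0,T]$; as in the seller's case I would prove the representation for an arbitrary a.s.\ continuous, bounded, $\mathbb F$-adapted process $\zeta$ in place of $\xi$, then substitute $\zeta=\xi$ and $\zeta=0$ into \eqref{B}. I would use throughout that the residual risk measure $\hatrho$ is itself a strongly time-consistent fully dynamic convex risk measure (\cite[Section~4]{KS05}), generated in the present Brownian-BSDE framework by a driver obtained from $g$ through the partial Fenchel transform in the hedgeable component (cf.\ \cite[Proposition~2.7]{SS15} and the proof of Theorem~\ref{thm:sellprice}); in particular $Y_\cdot:=-\hatrho_{\cdot,T}(0)$ is the unique solution of the $Y$-BSDE displayed in the statement, the precise shape of its generator (notably the $-z_2$ inside $g^*$) being an artifact of the BSDE sign convention together with the sign of $Y$. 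Since $\hatrho$ is a convex risk measure given by a Brownian BSDE with quadratic-growth driver, Lemma~\ref{lem:stc} — which rests on \cite[Theorem~3.21]{BEK09} — applies to $\hatrho$ as well, so strong time-consistency holds at intermediate stopping times.

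Next I would reduce to an optimal stopping problem. For $\tau\in\mathcal T_{t,T}$ the variable $\zeta_\tau$ is $\mathcal F_\tau$-measurable, so cash-invariance gives $\hatrho_{\tau,T}(\zeta_\tau)=\hatrho_{\tau,T}(0)-\zeta_\tau$, and strong time-consistency of $\hatrho$ at $\tau$ yields
\[
	\hatrho_{t,T}(\zeta_\tau)=\hatrho_{t,\tau}\bigl(-\hatrho_{\tau,T}(\zeta_\tau)\bigr)=\hatrho_{t,\tau}\bigl(\zeta_\tau-\hatrho_{\tau,T}(0)\bigr)=\hatrho_{t,\tau}(\zeta_\tau+Y_\tau).
\]
Therefore $\essinf_{\tau\in\mathcal T_{t,T}}\hatrho_{t,T}(\zeta_\tau)=\essinf_{\tau\in\mathcal T_{t,T}}\hatrho_{t,\tau}(\Psi_\tau)$ with $\Psi:=\zeta+Y$, which is a.s.\ continuous and bounded because $\zeta$ is and because $Y$, being the first component of a quadratic BSDE with bounded terminal datum, admits an a.s.\ continuous bounded version. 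This exhibits the buyer's quantity as an optimal stopping problem for the dynamic risk measure $\hatrho$ over the obstacle process $\zeta+Y$.

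Finally I would identify this value with a reflected BSDE. Set $\hat R^\zeta_u:=-\essinf_{\tau\in\mathcal T_{u,T}}\hatrho_{u,\tau}(\Psi_\tau)=\esssup_{\tau\in\mathcal T_{u,T}}\pi_{u,\tau}(-\Psi_\tau)$, where $\pi_{s,u}(x):=-\hatrho_{s,u}(-x)$ is the concave functional associated with $\hatrho$; one checks that $\pi$ is itself generated by a BSDE whose driver is obtained from that of $\hatrho$ by negating both $z$-arguments, namely $(z_1,z_2)\mapsto g^*(-\lambda,-z_2)-\lambda z_1$. This is precisely the generator in the first line of the stated system and accounts for the sign pattern of $\hat Z_1,\hat Z_2$ there. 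Applying the optimal-stopping characterisation of \cite[Proposition~3.1]{KLQT02} — the same tool used in the seller's proof, now applied to $\pi$ rather than to $\rho$ — then identifies $\hat R^\zeta$ with the first component of the RBSDE reflected from below at the obstacle $\zeta_\cdot+Y_\cdot$, with that generator and with terminal datum equal to the value of the obstacle at maturity $\zeta_T+Y_T$; i.e.\ exactly the first and third lines of the system, the second being the $Y$-BSDE of the first step. Uniqueness of the coupled system follows from well-posedness of the quadratic $Y$-BSDE (Definition~\ref{def:driver}) and uniqueness for the quadratic RBSDE with this bounded continuous obstacle. Taking $\zeta=0$ one sees directly that $Y$ (with $\hat K\equiv0$) solves the $\hat R$-lines, so $\hat R^0\equiv Y$ and $\hat R^0_t=-\hatrho_{t,T}(0)$; taking $\zeta=\xi$ and combining with \eqref{B} gives $P_t^{\text{buy}}=\hatrho_{t,T}(0)-\essinf_{\tau}\hatrho_{t,T}(\xi_\tau)=-\hat R^0_t+\hat R^\xi_t=\hat R^\xi_t-\hat R^0_t$, as claimed. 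The single point that is more than bookkeeping is this last identification — transferring the $\esssup$-type characterisation of \cite{KLQT02} to our $\essinf$ by passing to $\pi$, and checking that $\Psi=\zeta+Y$ satisfies the continuity and boundedness hypotheses there; everything else (cash-invariance, strong time-consistency of $\hatrho$, and the inactivity of the reflection when $\zeta=0$) is immediate from Section~\ref{sec:pricing} and the cited results, which is precisely why this proof is much shorter than that of Theorem~\ref{thm:sellprice}.
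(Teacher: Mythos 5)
Your proposal is correct and follows essentially the same route as the paper's proof: strong time-consistency at the stopping time (Lemma \ref{lem:stc}) plus cash-invariance to rewrite the obstacle as $\zeta_u+Y_u$ with $Y=-\hat{\rho}_{\cdot,T}(0)$, the BSDE representation of $-\hat{\rho}$ via \cite[Proposition 2.7]{SS15}, and the optimal-stopping/RBSDE identification of \cite[Proposition 3.1]{KLQT02}. The only (immaterial) difference is ordering — you rewrite the obstacle via cash-invariance before invoking \cite{KLQT02}, whereas the paper applies \cite{KLQT02} with obstacle $-\hat{\rho}_{u,T}(\zeta_u)$ and simplifies it at the end.
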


\begin{proof}
	Fix $t \in [0,T]$. We aim for an RBSDE expression for
	\[
		- \essinf_{\tau\in \mathcal{T}_{t,T}}  \essinf_{C \in \mathcal{C}_{t,T}} \, \rho_{t,T}\bigl(C +\zeta_\tau\bigr)
	\]
	for an almost surely continuous, bounded and $\mathbb{F}$-adapted process $\zeta$ for which we can then substitute $\zeta = \xi$ or $\zeta = 0$ to get the result. We 
	note first that $\tilde{Y}^{\zeta_\tau}$, 
	\[
		\tilde{Y}^{\zeta_\tau}_t := -\essinf_{C \in \mathcal{C}_{t,T}} \, \rho_{t,T}\bigl(C +\zeta_\tau\bigr) = -\hat{\rho}_{t,T}(\zeta_\tau) = 
		-\hat{\rho}_{t,\tau}\bigl( - \hat{\rho}_{\tau,T}(\zeta_\tau)\bigr),
	\]
	(by Lemma \ref{lem:stc}) satisfies the BSDE
	\[
		\tilde{Y}^{\zeta_\tau}_t =   - \hat{\rho}_{\tau,T}(\zeta_\tau)   +\int_t^\tau \Bigl( g^*\bigl(-\lambda_s, -\tilde{Z}_{2,s}\bigr) -\lambda_s\tilde{Z}_{1,s}\Bigr) \,
		ds  -\int_t^\tau\tilde{Z}_{1,s} \, dW_{1,s} -\int_t^\tau\tilde{Z}_{2,s} \, dW_{2,s}
	\]
	following Proposition \ref{prop:optstrat} (cf. also \cite[Proposition 2.7]{SS15}). Now (\cite[Proposition 3.1]{KLQT02}) implies that $\hat{R}^\zeta$, 
	\[
		\hat{R}^\zeta_t := -\essinf_{\tau\in \mathcal{T}_{t,T}} -\tilde{Y}^{\zeta_\tau}_t  = \esssup_{\tau\in \mathcal{T}_{t,T}} \tilde{Y}^{\zeta_\tau}_t,
	\]
	has with $\bar{\zeta}_u := - \hat{\rho}_{u,T}(\zeta_u)$ the RBSDE representation
	\[
		\left\{\begin{array}{ll}
		\hat{R}^\zeta_u &=  \bar{\zeta}_T +\int_u^T \bigl(g^*(-\lambda_s, -\hat{Z}_{2,s}) - \lambda_s\hat{Z}_{1,s}\bigr) \, ds +\hat{K}_T-\hat{K}_u -
		\int_u^T\hat{Z}_{1,s}\,  dW_{1,s} -\int_u^T\hat{Z}_{2,s} \, dW_{2,s}, \\
		\hat{R}^\zeta_u &\geq  \bar{\zeta}_u  , \quad 
		\int_u^T\bigl(\hat{R}^\zeta_s - \bar{\zeta}_s \bigr) \, d\hat{K}_s^H=0, \quad \bar{\zeta}_u := -\hat{\rho}_{u,T}(\zeta_u).
		\end{array} \right.
	\]
	Noting now that $ \bar{\zeta}_u = \zeta_u-\hat{\rho}_{u,T}(0)$ and writing down the BSDE representation of $Y := \tilde{Y}^{0}$ using \cite[Proposition 2.7]{SS15} 
	gives the result.
\end{proof}

\begin{remark}\label{rem:boundary}
	Note that in the case $\zeta = 0$, in both Theorem \ref{thm:sellprice} and \ref{thm:buyprice} the RBSDE and boundary dynamics agree, thus reducing the equation to a 
	classical (non-reflected) BSDE.
\end{remark}

\begin{remark}\label{rem:BS}
	We want to point out that the special case $\mu(v) \equiv \mu$, $\sigma(v) \equiv \sigma$ leads to the classical Black--Scholes model, a \textit{complete} market model. In this case both the buyer's and the seller's price agree with the arbitrage-free price. This is easy to see: in this case, the nonlinear component of the risk-adjusted drivers, $g^*$, does not depend on the second argument, and is deterministic. The (R)BSDEs reduce to $\hat{R}_t^\zeta:= R_t^\zeta +\int_t^Tg^*(-\lambda)ds$, where  $R^\zeta_t$ is an RBSDE with linear driver $-\lambda z$ and boundary $\zeta$, thus $P^{buy}_t= R_t^\xi -R_t^0=R_t^\xi$. Similarly, $\check{R}_t^\zeta:= R_t^\zeta -\int_t^Tg^*(-\lambda)ds$, giving us $P^{sell}_t= R_t^\xi = P^{buy}_t$.  Finally, as this price is free from buyer's and seller's arbitrage by Proposition \ref{prop:arbitrage}, it is the unique arbitrage free price. 
\end{remark}

\begin{remark}\label{rem:specifiv}
In the current section we have only considered the case where the buyer's and seller's filtration are equal, i.e., $\mathbb{F} = \mathbb{F}^{\text{buy}} = \mathbb{F}				^{\text{sell}} = \mathbb{F}^{S,B}$. This is due to the fact that on the one hand the main feature we want to highlight is that the asymmetry between buyer and seller already comes from the timing of the exercise, which is buyer determined. Moreover, there is no general way of including information asymmetry and the result will always be model dependent. Below, we discuss how the work can be extended to models with asymmetric filtrations by highlighting the features that would change. 

We first note that, in general, the difference in information will inform a different choice of risk measures, as they have to take the respective information into account. For example, if we include additional assets only available to one of the traders (or just information on these assets) via additional stochastic differential equations with correlated Brownian motions, the risk measures have to include that information.

To give a specific example, let's assume there exists an additional tradeable asset $U$ to which only the buyer has accesses, 
\[
	dU_t = \bigl(\nu(U_t) -r\bigr) U_t \, dt + \varsigma(U_t) U_t \, dW_{3,t}, \qquad U_0 =u
\] 
with a Brownian motion $W_{3,t}$ correlated to $W_{1,t}$ and $W_{2,t}$. In that case the buyer would have to use a risk measure with a driver depending on three variables, and the risk of holding the position $\zeta$ would be given by
\begin{equation*}
	R^B_{t} = -\zeta - \int_t^u g^B(s, Z^B_{1,s},Z^B_{2,s}, Z^B_{3,s})\,dt - \int_t^u Z^B_{1,s}\,dW_{1,s} -\int_t^u Z^B_{2,s}dW_{2,t} -\int_t^u Z^B_{3,s}dW_{3,t},
\end{equation*}
while the seller would use the driver with two variables,
\begin{equation*}
	R^S_{t} = -\zeta - \int_t^u g^S(s, Z^S_{1,s},Z^S_{2,s})\,dt - \int_t^u Z^S_{1,s}\,dW_{1,s} -\int_t^u Z^S_{2,s}dW_{2,t}.
\end{equation*}
Then, when it comes to hedging, one would have to specify if the buyer can actually trade in $U$ or only has information about $U$, to set up the relevant optimization over trading strategies.
 
To give a specific example of an application we have in mind: A subset of authors of the current paper are working on an example of how this approach can be implemented for the specific problem of shadow trading: insiders that want to stay within the bounds of legality can use their superior information to trade in derivatives written on stock of companies that are correlated to the insider's without violating the ban in trading in shares of their own company. Here the insider information on a non-tradeable asset enlarges the buyer's information, while the seller does not have this information.
\end{remark}

\section{Numerical Solution via Deep Learning}\label{sec:deep}

Solving the BSDE-R-BSDE systems of Theorems \ref{thm:sellprice} and \ref{thm:buyprice} is not a straightforward task, as we are encountering a four-dimensional problem with two forward and two backward SDEs, one of the backward ones serving as reflection boundary for the other. We rely on the recent breakthroughs in deep learning methods to solve this problem numerically. We first describe our general approach and then provide explicit solutions to a sample problem. The code for this implementation can be found at \url{https://github.com/stesturm/American-risk-indifference}.

\subsection{Solving BSDE-R-BSDEs using the RDBDP Method}

To solve the BSDE-R-BSDE systems of Theorems \ref{thm:sellprice} and \ref{thm:buyprice}, we rely on the Reflected Deep Backward Dynamic Programming (RDBDP) algorithm developed by Hur\'{e}, Pham and Warin in \cite{HPW20} (provided with more context and discussed in Hur\'{e}'s PhD thesis \cite{Hur19} as well as by Kharroubi in \cite{Kha21}). We show here the implementation for the seller's price (see Theorem \ref{thm:sellprice}), the one for the buyer's works analogously.

We divide the interval $[0,T]$ equidistantly by a partition $\pi$ of $N$ intervals, setting $t_i = \frac{iT}{N}$ for $i \in \{0, 1, \ldots N\}$. For the seller of an American style claim $h(S_t)$ we solve, backward iteratively, the system
\begin{align*}
	&\min_{\phi_{0,i}, \phi_{1,i}, \phi_{2,i} \in \mathcal{N}_i} \E\biggl[\Bigl\vert \phi_{0,i}(S^\pi_{t_i}, V^\pi_{t_i}) - \Bigl(Y^\pi_{t_{i+1}} - \bigl(g^*(-
	\lambda(V^\pi_{t_i}),  \phi_{2,i}(S^\pi_{t_i}, V^\pi_{t_i})) + \lambda(V^\pi_{t_i})\phi_{1,i}(S^\pi_{t_i}, V^\pi_{t_i})\bigr)  \, \Delta s \\ 
	&\phantom{==========}- \phi_{1,i}(S^\pi_{t_i}, V^\pi_{t_i}) \, \Delta W_{1,i} - \phi_{2,i}(S^\pi_{t_i}, V^\pi_{t_i}) \, \Delta W_{2,i}\Bigr)\Bigr\vert^2\biggr],\\
	&\min_{\hat{\phi}_{0,i}, \hat{\phi}_{1,i}, \hat{\phi}_{2,i} \in \hat{\mathcal{N}}_i} \E\biggl[\Bigl\vert \hat{\phi}_{0,i}(S^\pi_{t_i}, V^\pi_{t_i}) -
	\max\Bigl(\zeta_{t_i} + Y^\pi_{t_i}, \hat{R}^\pi_{t_{i+1}} - \bigl(g^*(-\lambda(V^\pi_{t_i}), \hat{\phi}_{2,i}(S^\pi_{t_i}, V^\pi_{t_i})) +
	\lambda(V^\pi_{t_i})\hat{\phi}_{1,i}(S^\pi_{t_i}, V^\pi_{t_i})\bigr)  \, \Delta s \\
	&\phantom{==========}- \hat{\phi}_{1,i}(S^\pi_{t_i}, V^\pi_{t_i}) \, \Delta W_{1,i} - \hat{\phi}_{2,i}(S^\pi_{t_i}, V^\pi_{t_i}) \, \Delta W_{2,i}\Bigr)
	\Bigr\vert^2\biggr]\\& \quad \text{subject to}\\
	&\left\{\begin{array}{ll}
		S^\pi_{t_{i+1}}  &= \bigl(\mu(V^\pi_{t_i}) -r \bigr)S^\pi_{t_i} \Delta t+\sigma(V^\pi_{t_i}) S^\pi_{t_i}\, \Delta W_{1,i}, \quad S^\pi_0 = s,\\
		V^\pi_{t_{i+1}} &= b(V^\pi_{t_i}) \, \Delta t +a(V^\pi_{t_i}) \, \Delta W_{2,i}, \quad V^\pi_0 = v,\\
		Y^\pi_{t_i} & = \phi^*_{0,i}(S^\pi_{t_i}, V^\pi_{t_i}), \quad Y^\pi_T =  0,\\
		\hat{R}^\pi_{t_i} & =  \hat{\phi}^*_{0,i}(S^\pi_{t_i}, V^\pi_{t_i}),  \quad \hat{R}^\pi_T =  h(S^\pi_T).
	\end{array} \right.
\end{align*}
where $\Delta W_{1,i}$, $\Delta W_{2,i}$ are the Brownian increments from time $t_i$ to $t_{i+1}$ and $\mathcal{N}$, $\hat{\mathcal{N}}$ are the hypothesis spaces for the deep neural networks for the boundary condition resp. the RBSDE (with one dimension for the solution process and two for the adjoint processes each) and $\phi^*_{0, i}$, $\hat{\phi}^*_{0, i}$ the stepwise optimizers (at time step $i$) of the first component. Practically, we first calculate the boundary condition for all time steps by calculating the solution of the zero terminal condition BSDE process and then adding it to the payoff at early exercise. Using the same Brownian paths by fixing seeds, we calculate the RBSDE process using the boundary condition previously calculated.  In this way, we have to solve only a single BSDE for the boundary, that we can use for the seller's  price of any type of American payoff  (cf. Remark \ref{rem:boundary}). Specifically, we use a deep neural network with $2$ hidden layers and ReLu activation functions, and use the Adam optimizer (cf. \cite{KB17}) for stochastic gradient descent. The implementation \url{https://github.com/stesturm/American-risk-indifference} is in TensorFlow.

\subsection{Numerical Illustration}

To illustrate the the results, we will consider a specific example along the lines of \cite[Section 3]{SS15} which allows for the direct comparison to the European option example considered there. We assume a classical American put option claim $\hat{\xi}$, thus the discounted claim is $\xi_t = \bigl(e^{-rt}K- S_t\bigr)^+$, and use distorted entropic risk measures, given by the driver
\[
	g\bigl(z_1, z_2\bigr) := \frac{\gamma}{2}\Bigl(z_1^2 + z_2^2\Bigr) + \eta \gamma z_1 z_2 + \frac{\eta^2 \gamma}{2}z_2^2
	= \frac{\gamma}{2}\Bigl(\bigl(z_1+\eta z_2\bigr)^2 + z_2^2\Bigr).
\]
This driver represents in the case $\eta=0$ a classic entropic risk measure (equivalent to exponential utility) with risk tolerance parameter $\gamma$; the term $\eta$ introduces an additional volatility risk premium. The Fenchel-Legendre transform is given by
\[
	g^*\bigl(z_1, z_2\bigr) := \frac{1}{2\gamma}\Bigl(z_1^2 -  \eta z_1 z_2 - \frac{\gamma}{2}z_2^2\Bigr).
\]
As stochastic volatility model we choose the arctangent model
\begin{align*}
	dS_t & = \bigl(\mu -r\bigr) S_t\, dt + \sigma(V_t) S_t\, dW_{1,t}, \quad S_0 = s,\\
	\sigma(y) & = \frac{a}{\pi}\Bigl(\arctan(y - 1) + \frac{\pi}{2}\Bigr) + b,\\ 
	dV_t &= \alpha \bigl( m-V_t\bigr) \, dt + \nu \sqrt{2\alpha} \Bigl(\rho \, dW_{1,t} +  \sqrt{1-\rho^2}\, dW_{2,t}\Bigr), \quad Y_0 = y.
\end{align*}
To display the results, we use the market convention to plot in addition to prices (European) implied volatilities by inverting the Black--Scholes formula. We choose as parameters
\begin{align*}
	r &= 0.02, \, \mu = 0.08,  \, a = 0.7, \, b = 0.03, \, s= 100, \, m=0, \, \alpha = 5, \, \nu = 1, \, \rho = -0.2, \, y = .15, \\  \gamma &= 1, \,\eta = 0.2, \, T = 0.25.
\end{align*}
For the hyperparameters of the neural network, we use adaptive epochs, namely 1000 in the beginning and 300 for the last 5 steps, at a batch size of 1100 and a learning rate of 0.01, and we use $N=10$ time steps. We are calculating the prices and the implied volatility for strikes from $K=85$ to $K=115$ in steps of $5$ and plot them against strikes resp. log-moneyness, see Figure \ref{fig:smiles}. We note as comparison that the initial volatility of the stock is $y \approx 15\%$ while the mean-reversion level is $\sigma(m) \approx 20.50\%$. 

\begin{figure}[htb]
	\centering
	\includegraphics[width=0.3\textwidth]{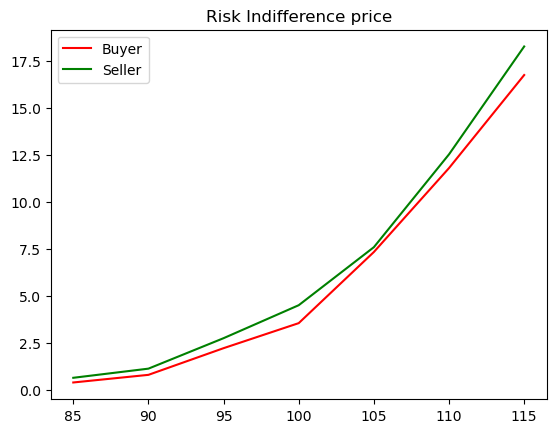}\quad 
	\includegraphics[width=0.3\textwidth]{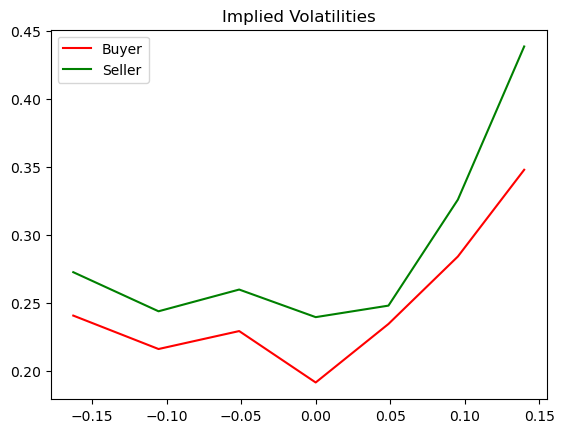}\quad
	\includegraphics[width=0.3\textwidth]{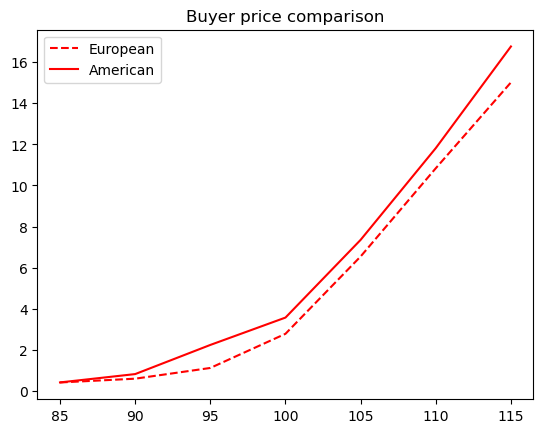}
	\caption{Left panel: Buyer's and seller's prices of American options in terms of strikes. Middle panel: Implied volatility of buyer's and seller's prices of American options in terms of log-moneyness. Right panel: Prices of buyer's prices of European and American options in terms of strikes.}\label{fig:smiles}
\end{figure}

Figure \ref{fig:smiles} shows in the left panel that the difference between buyer's and seller's indifference price is quite small as absolute (dollar) amount. This indicates that the hypothesis that seller's and buyer's price correspond to market ask and bid when the risk measure is consider as that of a representative agent is very reasonable. The middle panel shows that the differences between buyer and seller become more pronounced when we consider implied volatilities. Both, buyer an seller implied volatility curves exhibit the typical smile pattern observed in markets, and the curves allow for comparison with the results in \cite[Figure 3.4]{SS15} derived with traditional numerical methods. Finally, the right panel shows that the difference between European and American indifference prices is quite small (but larger than the difference between American buy and sell prices).

\section{Conclusion}\label{sec:conc}

Indifference pricing is an important mechanism to establish reasonable reservation prices for buyers and sellers of derivative claims. The current paper explains how this can be done for American-style claims using residual risk after hedging as indifference price mechanism, a choice that is driven by both, the availability of a comprehensive mathematical framework (risk measures and BSDEs) as well as the prevalence of risk measures in industrial practice (as compared to utility-based concepts).

The main contribution of the paper is twofold: On the one hand we provide a general and detailed setup for risk-indifference pricing of American style contingent claims; on the other hand we show how in the case of market incompleteness due to stochastic volatility, the risk-indifference price can be expressed through BSDE-R-BSDEs, backward stochastic differential equations in which the reflection boundary is given itself by a backward stochastic differential equation, reflecting the risk of the position between exercise and maturity. As an add-on, we show how the arising BSDE-R-BSDEs can be solved numerically using deep learning methods and illustrate this on a specific example.

\bibliographystyle{plain}
\bibliography{bibliography}
\appendix
\section{Appendix}

In this appendix we provide the necaessary result of the existence of the optimal strategy used in the proofs of Chapter \ref{sec:stochvol}, focusing first on BSDEs and then RBSDEs.

\subsection{The BSDE case}

Let $\bigl(R^*, Z_1^*, Z_2^*\bigr)$ denote the solution to the BSDE with terminal condition $\xi_T=f(S_T)$ and driver $- g^*(-\lambda_s, z_2)-\lambda_s z_1 = \inf_{\theta\in \R}\bigl\{-\theta\lambda_s+g( z_1-\theta, z_2)\bigr\}$, such that $R^*$ is bounded and $\E\bigl[\int_0^T \bigl(|Z_1^*|^2 +|Z_2^*|^2\bigr)\bigr]<\infty$.  Existence and uniqueness of this BSDE is provided by \cite{Kob00}.

\begin{proposition}\label{prop:optstrat}
	Let $f$ be a bounded continuous function. The minimal hedged risk associated with $\xi_T=f(S_T)$ is given by $R^*$. In other words, 
	\[
		R^*_t= \inf_{h\in \mathcal H} \rho_{t, T}\bigl(\xi_T + \int_t^T h_s \, dS_s\bigr).
	\]
	Moreover, the optimal strategy, $h^*$, is given by
	\begin{equation}\label{inf-strategy}
		h^*_s:= \frac{1}{\sigma(V_s)S_s}\Bigl(Z^*_{1,s} - g_{z_1}^{-1}\bigl(-\lambda_s, Z^*_{2,s}\bigr)\Bigr),
	\end{equation}
	i.e., $R_t^*= \rho_{t, T}\bigl(\xi_T + \int_t^T h^*_s \, dS_s\bigr)$. 
\end{proposition}

Note that we do not have necessarily $h^* \in \mathcal H$.

\begin{proof}
We will first identify the strategy $h^*$ such that $\rho_{t, T}\bigl(\xi_T + \int_t^T h^*_sdS_s\bigr)= R_t^*$.
For $h\in \mathcal H$, denote by $(R^h, Z_1, Z_2)$ the solution of the BSDE
\[
	R^h_t = - \xi_T -  \int_t^T h_s \, dS_s+\int_t^Tg(s, Z_{1,s}, Z_{2,s}) \, ds -\int_t^T Z_{1,s} \, dW_{1,s}  -\int_t^T Z_{2,s} \, dW_{2,s}.
\]
From \cite{Kob00} it follows that a solution exists, is unique and $R^h$ is bounded. Since the risk measures $\rho_{t,T}(\cdot)$ are given by BSDEs with driver $g$, we have $R_t^h= \rho_{t, T}\bigl(\xi_T + \int_t^T h_sdS_s\bigr)$. Rewriting this, we get 
\begin{align*}
	R^h_t &=-\xi_T+ \int_t^T \Bigl(-h_s\bigl(\mu(V_s)-r\bigr)S_s+g\bigl(s, Z_{1,s}, Z_{2,s}\bigr)\Bigr) \, ds -\int_t^T \Bigl(h_s \sigma(V_s)S_s +Z_{1,s}\Bigr) \, dW_{1,s}  -\int_t^T Z_{2,s} \, dW_{2,s} 
	\intertext{and setting $\theta_s:=h_s \sigma(V_s)S_s$ and   $\tilde{Z}_{1,s}:=\theta_s +Z_{1,s}$ finally}
	&= - \xi_T +\int_t^T \Bigl(-\theta_s \lambda_s+g\bigl(s, \tilde{Z}_{1,s}-\theta_s, Z_{2,s}\bigr)\Bigr) \, ds -\int_t^T \tilde{Z}_{1,s} \, dW_{1,s}  -\int_t^T Z_{2,s} \, dW_{2,s}.
\end{align*}
Observe that, by the comparison principle for BSDEs, see \cite[Theorem 2.6]{Kob00}, every $R^h_t$ is bounded below by $R_t^*$, the BSDE with driver  $\inf_{\theta\in \R}\bigl\{-\theta \lambda+g(s, z_1-\theta, z_2)\bigr\}$. The infimum is attained at $\theta^*_s:= z_1 - g_{z_1}^{-1}(-\lambda_s, z_2)$, and so $R_t^*= R_t^{h^*}$. However, this $h^*$ need not be in $\mathcal H$. So we only have $\inf_{h\in \mathcal H} R_t\geq R_t^*$ and not equality. 
By Lemma \ref{lem:limit_of_admissible_risk},  $R_t^* = \lim_{m\to\infty}R_t^{h^m}$ a.s., for a sequence of $h^m\in \mathcal H$  thus completing the proof.
\end{proof}

\begin{lemma}\label{lem:limit_of_admissible_risk}
There exists a sequence $\{h^m\}\subset \mathcal H$ such that $R_t^* = \lim_{m\to\infty}R_t^{h^m}$ a.s..
\end{lemma}

\begin{proof}
Under Assumption \ref{model_assmpn}, Theorem V.8.1 and Remark V.8.1 in \cite{Lad68} ensure the existence of a bounded continuous function $u(t,x,v)$ with bounded and continuous  first and second order derivatives  such that $R_t^*= u(t, X_t, V_t)$, $Z^*_{1, t}=\sigma(V_t)\p_x u(t, X_t, V_t)+\rho a(V_t)\p_v u(t,X_t, V_t) $ and $Z^*_{2,t}=\sqrt{1-\rho^2}a(V_t)\p_vu(t, X_t, V_t)$ (see Theorem 2.9 and its proof in \cite{SS15}), where $X_t:=\ln S_t$.  Thus, there exists a constant $C^*>0$ such that 
\begin{equation}\label{bound_Z^*}
	\bigl\vert R_t^*\bigr\vert, \bigl\vert Z^*_{1,t}\bigr\vert, \bigl\vert Z^*_{2,t}\bigr\vert <C^*, \qquad \text{for all }t\in [0,T].
\end{equation}
Then by the definition of $h^*$ in \eqref{inf-strategy}, we have $h^*S$ is also bounded. Without loss of generality let's say
\begin{equation}\label{eq:h*bound}
	\sup_{0\leq s\leq T}\bigl\vert h^*_sS_s \bigr\vert < C^*.
\end{equation}
By assumption on the coefficients for the stock price model, we have $\E\bigl[\int_0^T S_t^4 dt\bigr]<\infty$ and  $\E\bigl[\bigl(\int_t^uh_s^*d \, S_s\bigr)^2\bigr]<\infty$ for all $u\in [t,T]$. Consequently,
\begin{equation}\label{eq:finite-hedge}
	\biggl\vert \int_t^uh_s^* \, dS_s \biggr\vert <\infty \qquad  a.s. \text{ for all }u\in [t,T].
\end{equation}

Let now
\[
	\sigma_m:= \inf_{s\geq t}\biggl\{\biggl\vert \int_t^s h^*_u \, dS_{u}\biggr\vert >m\biggr\}\wedge T
\]
and define
\[
	h^m_s:= h^*_s \ind_{\{s\leq \sigma_m\}}.
\]
Then ,by \eqref{eq:finite-hedge}, we have  $\sigma_m\to T$ a.s. Also, $\bigl\vert \int_t^Th^m_s \sigma_s \, dS_s \bigr\vert \leq m$ and so $h^m$ is an admissible strategy and $h^m\to h^*$ a.s. Let $R^m_t:=R^{h^m}_t= \rho_{t,T}\bigl(X+\int_t^T h^m_s \, dS_s\bigr)$, then $\bigl(R^m, Z^m_1, Z^m_2\bigr)$ is a solution to the BSDE
 \[
	 R_t^m = -\xi_T +\int_t^T\Bigl(-h^m_s\sigma(V_s)S_s\lambda_s+ g\bigl(Z_{1,s}^m-h^m_s\sigma(V_s)S_s, Z_{2,s}^m\bigr)\Bigr) \, ds -\int_t^T Z^m_{1,s} \, dW_{1,s} -\int_t^T Z^m_{2,s}\, dW_{2,s}
 \]
 and recall that $\bigl(R^*, Z_1^*, Z_2^*\bigr)$ satisfies
 \[
 	R_t^*= -\xi_T +\int_t^T\Bigl(-h^*_s\sigma(V_s)S_s\lambda_s+ g\bigl(Z_{1,s}^*-h^*_s\sigma(V_s)S_s, Z_{2,s}^*\bigr)\Bigr) \, ds -\int_t^T Z^*_{1,s} \, dW_{1,s} -\int_t^T Z^*_{2,s} \, dW_{2,s}.
 \]

We state now two useful results about uniform bounds on $R^m$ and $\bZ^m := \bigl(Z_1^m,Z_2^m\bigr)$ which will be proved later:
\begin{lemma}\label{lem:Appendix-uniform_Rm_bound}
 	There exist uniform bounds $L, U\in \mathbb R$ such that for every $m$, 
 	\[
 		L\leq R^m_t \leq U \quad a.s. \qquad \text{ for all }t\in [0, T].
 	\]
\end{lemma}

\begin{lemma}\label{lem:Appendix-uniform_Zm_bound}
	There exists a uniform bound $K >0$, such that
 	\[
 		\E\biggl[\int_0^T\bigl(Z^m_{1,t}\bigr)^2 \, dt +\int_0^T\bigl(Z^m_{2,t}\bigr)^2 \, dt\biggr]< K \qquad \text{for all } m.
	\]
\end{lemma}
 
Define $\bar{R}^m_s:=R_s^m-R_s^*$,  $\bar{g}_s:=  g\bigl(Z_{1,s}^m-h^m_s\sigma(V_s)S_s, Z_{2,s}^m\bigr)-  g\bigl(Z_{1,s}^*-h^*_s\sigma(V_s)S_s, Z_{2,s}^*\bigr),$
$\bar{Z}^m_{1,s}:= Z^m_{1,s} - Z^*_{1,s}$,  $\bar{Z}^m_{2,s}:= Z^m_{2,s} - Z^*_{2,s}$ and $\bar{h}^m:=h^m-h^*$. For simplicity write $\Zm:=\bigl(\bar{Z}^m_1, \bar{Z}^m_2\bigr)$ and $\W:=\bigl(W_1, W_2\bigr)$.  Observe that $\bar{R}^m_t\geq 0$ for $t\in [0,T]$, $\bar{R}^m_T=0$  and
\[
	-d\bRm_s= \big(-\hm\sigma(V_s)S_s\lambda_s+ \bar{g}_s\bigr) \, ds -\Zm_s\cdot \,  d\W_s .
\]

For $\alpha>0$, the value of $\alpha$ to be determined later, we apply  It\^o's formula to $e^{\alpha \bar{R}^m}$ which yield
\[
	e^{\alpha \bRm_t} = 1-\alpha \int_t^T e^{\alpha \bRm_s}\hm_s\sigma(V_s)S_s\lambda_s \, ds +\alpha \int_t^Te^{\alpha \bRm_s} \bar{g}_s \, ds 
 	+\alpha \int_t^Te^{\alpha \bRm_s}\Zm_s\cdot \, d\W_s -\frac{\alpha^2}{2}\int_t^T e^{\alpha \bRm_s}  |\Zm_s|^2 \, ds .
 \]
 
Taking the conditional expectation with respect to $\mathcal F_t$, we get
\begin{align}\label{eq:Ito-formula}
	e^{\alpha \bRm_t} & = 1  - \alpha \E_t\biggl[\int_t^T e^{\alpha \bRm_s}\hm_s\sigma(V_s)S_s\lambda_s \, ds\biggr]+\alpha \E_t\biggl[\int_t^Te^{\alpha \bRm_s} \bar{g}_sds \biggr]
 	-\frac{\alpha^2}{2}\E_t\biggl[\int_t^T e^{\alpha \bRm_s}  |\Zm_s|^2ds\biggr] \nonumber\\
 	&= 1 - \alpha \E_t\biggl[\int_t^T e^{\alpha \bRm_s}\hm_s\sigma(V_s)S_s\lambda_s \, ds\biggr] +I_1+I_2.
 \end{align}

We first bound $I_1$ as follows.  Using the bounds on the partial derivatives of $g$, we get
\[
	I_1  = \alpha \E_t\biggl[\int_t^Te^{\alpha \bRm_s} \bar{g}_s \, ds \biggr] \leq  \alpha C_1 \E_t\biggl[\int_t^Te^{\alpha \bRm_s}\Bigl(1+\bigl\vert \bZ^m \bigr\vert+\bigl\vert \bZ^*_s \vert+\vert h^*S_s \bigr\vert \Bigr) \Bigl(\bigl\vert \Zm \bigr\vert +\bigl\vert \hm  S_s \bigr\vert \Bigr) \, ds \biggr] 
\]
using Cauchy-Schwarz we get
\[
	I_1  \leq  \alpha C_2 \biggl(\int_t^T \E_t\Bigl[e^{\alpha \bRm_s}\Bigl(1+\bigl\vert \bZ^m\bigr\vert^2+\bigl\vert \bZ^*_s\bigr\vert^2+\bigl\vert h^*S_s\bigr\vert^2\Bigr)\Bigr] \, ds \int_t^T \E_t\Bigl[e^{\alpha \bRm_s}\Bigl(\bigl\vert \Zm\bigr\vert ^2+\bigl\vert \hm S_s\bigr\vert ^2\Bigr)\Bigr] \, ds\biggr)^{1/2}
	\]
and using \eqref{eq:h*bound} and  Lemmas \ref{lem:Appendix-uniform_Rm_bound} and  \ref{lem:Appendix-uniform_Zm_bound}, we get
\[
	I_1   \leq  \alpha C_3 \Biggl[\sqrt{\int_t^T \E_t \Bigl[ e^{\alpha \bRm_s} \bigl\vert \Zm \bigr\vert^2\bigr] \, ds}+\sqrt{\int_t^T \E_t \Bigl[e^{\alpha \bRm_s} \bigl\vert \hm S_s\bigr\vert ^2\Bigr] \, ds}\Biggr].
\]

Let $A_m:=\int_t^T \E_t \bigl[e^{\alpha \bRm_s}\bigl\vert \Zm \bigr\vert^2 \bigr] \, ds$, then
\[
	I_1\leq \alpha C_3 \sqrt{A_m} +\alpha C_3 \sqrt{\int_t^T\E_t \bigl[e^{\alpha \bRm_s}\bigl\vert \hm S_s\bigr\vert ^2\bigr] \, ds},
\]
where the constant $C_3$ is independent of $m$.

Observe that $I_2= -\frac{\alpha^2}{2}A_m$. Let $A_\infty:=\limsup_{m\to\infty} A_m$. Suppose $A_\infty>0$. Let us consider a subsequence $\{m_k\}$ along which $A_{m_k}$ converges to $A_\infty$. Then for large $k$,  $A_{m_k}> A_\infty/2$. Choosing $\alpha = 4C_3 \sqrt{2/A_\infty}$ we get
 
\begin{equation}\label{eq:Bound_I1+I2}
	I_1+I_2\leq \alpha C_3 \sqrt{\int_t^T \E_t \Bigl[e^{\alpha \bar{R}^{m_k}_s} \bigl \vert \bar{h}^{m_k} S_s\bigr\vert^2\Bigr] \, ds} - \frac{\alpha^2}{4} A_{m_k}\qquad \text{ for large }k.
\end{equation}
Note that $\alpha$ does not depend on the index $m_k$.  

Along this subsequence, we can bound $e^{\alpha \bar{R}^{m_k}_t}$ in \eqref{eq:Ito-formula} as follows:
\[
	e^{\alpha \bar{R}^{m_k}_t}\leq 1 - \alpha \E_t\biggl[\int_t^T e^{\alpha \bar{R}^{m_k}_s}\bar{h}^{m_k}_s\sigma(V_s)S_s\lambda_s \, ds\biggr]+\alpha C_3 \sqrt{\int_t^T \E_t \Bigl[e^{\alpha \bar{R}^{m_k}_s}\bigl\vert\bar{h}^{m_k} S_s\bigr\vert^2\Bigr] \,ds} - \frac{\alpha^2}{4} A_{m_k}.
\]
We rewrite this  as
 \[
 	\frac{\alpha^2}{4} A_{m_k}+e^{\alpha \bar{R}^{m_k}_t}\leq 1 - \alpha \E_t\biggl[\int_t^T e^{\alpha \bar{R}^{m_k}_s}\bar{h}^{m_k}_s\sigma(V_s)S_s\lambda_s \, ds\biggr]+\alpha C_3 \sqrt{\int_t^T\E_t e^{\alpha \bar{R}^{m_k}_s} \bigl\vert \bar{h}^{m_k} S_s \bigr\vert^2)ds}.
 \]
 Taking $\limsup_{k\to\infty}$ and using $\bar{h}^{m_k}\to0$, by the dominated convergence theorem we get
 \[
 	\frac{\alpha^2}{4} A_\infty+\limsup_{k\to\infty}e^{\alpha \bar{R}^{m_k}_t} \leq 1.
\]
Since $e^{\alpha \bar{R}^{m_k}_t}\geq 1$ and $A_{m_k}\geq 0$, this gives us $A_\infty=0$. This implies $A_m\to 0$ along the original full sequence. Reverting to the full sequence and taking $\limsup_{m\to\infty}$ of \eqref{eq:Ito-formula} we get
\[
	\limsup_{m\to\infty} e^{\alpha \bRm_t} \leq 1
\] and conclude that $\lim_{m\to \infty} R^m = R^*$.
\end{proof}

\begin{proof}[Proof of Lemma \ref{lem:Appendix-uniform_Rm_bound}]
Recall that  $- g^*(-\lambda_s, z_2)-\lambda_s z_1 = \inf_{\theta\in \R}\bigl\{-\theta\lambda_s+g( z_1-\theta, z_2)\bigr\}$. Taking $\theta= h^m_s\sigma(V_s)S_s$, we see that
\[
	-\lambda_s z_1-g^*_s \bigl(-\lambda_s, z_2\bigr) \leq g^m_s\bigl(z_1, z_2\bigr), \text{ for all } s\in [0,T],
\]
where $g^m_s(z_1, z_2):= -h^m_s\sigma(V_s)S_s\lambda_s+ g\bigl(z_1-h^m_s\sigma(V_s)S_s, z_2\bigr)$. By the comparison theorem for quadratic BSDEs given in \cite[Theorem 5]{BH08}, 
$R^*_t \leq R^m_t$ for all $t$.  

On the other hand, we can bound $g^m(z_1,z_2) \leq C\bigl(1+z_1^2+z_2^2\bigr) = : g^0(z_1, z_2)$ using the quadratic growth of the driver $g$ and the bounds on $h^*S$ in \eqref{eq:h*bound}. Thus, by the comparison principle, $R_t^m\leq R_t^0$ for all $t$, where $(R^0, Z_1^0, Z_2^0)$ is a solution to the BSDE
\[
	R_t^0= -\xi_T+\int_t^Tg^0\bigl(Z_{1,s}^0, Z_{2,s}^0\bigr) \, ds - \int_t^T Z^0_{1,s} \, dW_{1,s} - \int_t^T Z^0_{2,s} \, dW_{2,s},
\]
with bounded $R^0$.

Since $R^*$ and $R^0$ are bounded and $R_t^*\leq R^m_t\leq R^0_t$ a.s. for all $t$, we get the result.
\end{proof}

\begin{proof}[Proof of Lemma \ref{lem:Appendix-uniform_Zm_bound}]
Let $\alpha>0$ be a constant whose value will be determined later. Recall that
\[
	-dR^m_t= \Bigl(-h^m_s\sigma(V_s)S_s\lambda_s+ g\bigl(Z_{1,s}^m-h^m_s\sigma(V_s)S_s, Z_{2,s}^m\bigr)\Bigr) \, ds -\int_t^T Z^m_{1,s} \, dW_{1,s} -\int_t^T Z^m_{2,s}\, dW_{2,s}.
\]
Applying It\^o's formula and taking expectations, we get
\begin{align*}
	\E_t\bigl[e^{\alpha R_t^m}\bigr] &= \E_t \bigl[e^{\alpha \xi_T}\bigr]+\alpha \E_t \biggl[\int_t^Te^{\alpha R^m_s}\Bigl(-h^m_s\sigma(V_s)S_s\lambda_s+ g\bigl(Z_{1,s}^m-h^m_s\sigma(V_s)S_s, Z_{2,s}^m\bigr)\Bigr) \, ds\biggr]\\
	&\phantom{=:} -\frac{\alpha^2}{2} \E_t \biggl[\int_t^T e^{\alpha R^m_s} \bigl(Z^m_{1,s}\bigr)^2 \, ds +\int_t^T e^{\alpha R^m_s} \bigl(Z^m_{2,s}\bigr)^2 \, ds\biggr] \\
	&\leq  \E_t \bigl[e^{\alpha \xi_T}\bigr]+\alpha C \E_t \biggl[ \int_t^Te^{\alpha R^m_s}\Bigl(1+ (h^*_sS_s)^2 +\bigl(Z^m_{1,s}\bigr)^2+\bigl(Z^m_{2,s}\bigr)^2\Bigr) \, ds \biggr]\\
	&\phantom{=:} -\frac{\alpha^2}{2} \E_t\biggl[\int_t^T e^{\alpha R^m_s} \bigl(Z^m_{1,s}\bigr)^2 \, ds +\int_t^T e^{\alpha R^m_s} \bigl(Z^m_{2,s}\bigr)^2 \, ds\biggr] 
\end{align*}
by the quadratic growth of $g$ and the boundedness assumptions on the model coefficients. Taking $\alpha>4C$ in the above, we get
\[
	\E_t\bigl[e^{\alpha R_t^m}\bigr] \leq  \E_t \bigl[e^{\alpha \xi_T}\bigr]+\alpha C \E_t \biggl[\int_t^Te^{\alpha R^m_s}\Bigl(1+ (h_s^*S_s)^2\Bigr) \, ds\biggr] - \frac{\alpha^2}{4}\E_t \biggl[\int_t^T e^{\alpha R^m_s} \bigl(Z^m_{1,s}\bigr)^2 \, ds +\int_t^T e^{\alpha R^m_s} \bigl(Z^m_{2,s}\bigr)^2 \, ds\biggr].
\]
Since $\E_t\bigl[e^{\alpha R_t^m}\bigr] \geq 0$, we have
\[
	\E_t\biggl[\int_t^T e^{\alpha R^m_s} \bigl(Z^m_{1,s}\bigr)^2ds +\int_t^T e^{\alpha R^m_s} \bigl(Z^m_{2,s}\bigr)^2 \, ds\biggr] \leq \frac{4}{\alpha^2} \E_t \bigl[e^{\alpha \xi_T}\bigr] + \frac{4C}{\alpha} \E_t \biggl[ \int_t^Te^{\alpha R^m_s}\bigl(1+ S_s^2\bigr) \, ds \biggr].
\]
Using the bounds in Lemma \ref{lem:Appendix-uniform_Rm_bound} on $R^m$, 
\[
	e^{\alpha L} \E_t\biggl[\int_t^T  \bigl(Z^m_{1,s}\bigr)^2ds +\int_t^T e^{\alpha R^m_s} \bigl(Z^m_{2,s}\bigr)^2 \, ds\biggr] \leq \frac{4}{\alpha^2} \E_t \bigl[e^{\alpha \Vert \xi_T\Vert_\infty}\bigr] + \frac{4C}{\alpha} e^{\alpha U}\int_t^T\Bigl(1+ \E_t\bigl[S_s^2\bigr]\Bigr) \, ds.
\]
The boundedness of $\int_t^T \E_t \bigl[S_s^2\bigr] \, ds$ implies the result.
\end{proof}

\subsection{The RBSDE case}

Let $\bigl(\HR^*, \HZ_1^*, \HZ_2^*, \HK^*\bigr)$ be the solution to the  RBSDE with terminal condition $\xi_T$, driver $- g^*(-\lambda_s, z_2)-\lambda_s z_1 = \inf_{\theta\in \R}\bigl\{-\theta\lambda_s+g( z_1-\theta, z_2)\bigr\}$ and boundary $\xi_t$. Let $\bigl(\HR^h, \HZ_1^h, \HZ_2^h, \HK^h\bigr)$ denote the solution to the RBSDE with terminal condition $\xi_T$, driver $-h_s\sigma(V_s)S_s\lambda_s+ g(z_1- h_s\sigma(V_s)S_s, z_2)$ and boundary $\xi_t$. Note that the only difference between the two RBSDEs is in the driver. By the comparison principle for RBSDEs, see \cite{KLQT02}, we have $\HR_t^*\leq \HR_t^h$ for any choice of strategy $h\in \mathcal H$ and hence $R_t^*\leq \inf_{h\in \mathcal H}R_t^h$. To prove equality, we show that $\HR_t^*=\lim_{m\to\infty}\HR_t^{h_m}$ a.s. for some sequence $h^m\in \mathcal H$.

We choose $h^m$ as we did above in the BSDE case and the argument follows similarly. The main difference lies in the additional term coming from $K$.  Thus

\[
	-d\bar{R}^m_s= \bigl(-\hm_s \sigma(V_s)S_s\lambda_s+ \bar{g}_s)\bigr) \, ds - \bar{Z}^m_{1,s} \, dW_{1,s} - \bar{Z}^m_{2,s} \, dW_{2,s} + d\bar{K}^m_s,
\]
where $\bar{R}^m_s:=\HR_s^m-\HR_s^*$, $\bar{g}_s:=  g\bigl(\HZ_{1,s}^m-h^m_s\sigma(V_s)S_s, \HZ_{2,s}^m\bigr)-  g\bigl(\HZ_{1,s}^*-h^*_s\sigma(V_s)S_s, \HZ_{2,s}^*\bigr),$
$\bar{Z}^m_{1,s}:= \HZ^m_{1,s} - \HZ^*_{1,s}$,  $\bar{Z}^m_{2,s}:= \HZ^m_{2,s} - \HZ^*_{2,s}$, $\bar{h}^m:=h^m-h^*$ and  $\bar{K}_s:= \HK^m_s-\HK^*_s$. Then 
\begin{align*}
	e^{\alpha \bar{R}^m_t} = & 1 -\alpha \int_t^T e^{\alpha \bar{R}^m_s} \hm_s\sigma(V_s)S_s\lambda_s \, ds +\alpha \int_t^T e^{\alpha \bar{R}^m_s}  \bar{g}_s \, ds -\alpha \int_t^T\bar{R}^m_s\bar{Z}^m_{1,s} \, dW_{1,s} \\
	&-\alpha\int_t^T \bar{R}^m_s \bar{Z}^m_{2,s} \, dW_{2,s} -\frac{\alpha^2}{2}\int_t^T e^{\alpha \bar{R}^m_s} \bigl(\bar{Z}^m_{1,s}\bigr)^2 \, ds-\frac{\alpha^2}{2}\int_t^T e^{\alpha \bar{R}^m_s}\bigl(\bar{Z}^m_{2,s}\bigr)^2 \, ds +\alpha\int_t^T e^{\alpha \bar{R}^m_s} \, d\bar{K^m_s}.
\end{align*}

By Proposition 5 in \cite{L14}, we have $d\bar{K}^m_s \leq 0$, and since $e^{\alpha \bar{R}^m_s} \geq 0$, we have
\begin{align*}
	e^{\alpha \bar{R}^m_t} & \leq  1 -\alpha \int_t^T e^{\alpha \bar{R}^m_s} \hm_s\sigma(V_s)S_s\lambda_s \, ds +\alpha \int_t^T e^{\alpha \bar{R}^m_s}  \bar{g}_s \, ds -\alpha \int_t^T\bar{R}^m_s\bar{Z}^m_{1,s} \, dW_{1,s} \\
	& \phantom{=:} -\alpha\int_t^T \bar{R}^m_s \bar{Z}^m_{2,s} \, dW_{2,s} -\frac{\alpha^2}{2}\int_t^T e^{\alpha \bar{R}^m_s} \bigl(\bar{Z}^m_{1,s}\bigr)^2 \, ds -\frac{\alpha^2}{2}\int_t^T e^{\alpha \bar{R}^m_s}\bigl(\bar{Z}^m_{2,s}\bigr)^2 \, ds.
\end{align*}
We then proceed as in the BSDE case to get $\HR_t^*=\lim_{m\to\infty} \HR^m_t$.
\end{document}